\newcommand{\qn}{\{0,1\}^n}
\theoremstyle{definition}
\newtheorem{theorem}{Theorem}[section]
\newtheorem{lemma}{Lemma}[section]
\newtheorem{proposition}{Proposition}[section]
\newtheorem{corollary}{Corollary}[section]
\newtheorem{definition}{Definition}[section]
\newtheorem{example}{Example}[section]
\newtheorem{remark}[theorem]{Remark}
\title{Hamiltonian dynamics of Boolean networks\thanks{This work was partially funded by ANID-Chile through the Center for Mathematical Modeling (CMM), proyecto BASAL FB210005, and ANID-Subdirección de Capital Humano/Magíster Nacional/2023-22231646.}}
\author{Arturo Zapata-Cortés\thanks{Departamento de Ingeniería Informática y Ciencias de la Computación, Universidad de Concepción, Chile. Email: \texttt{azapata2016@inf.udec.cl}} 
\and Julio Aracena\thanks{CI$^2$MA and Departamento de Ingeniería Matemática, Universidad de Concepción, Chile. Email: \texttt{jaracena@ing-mat.udec.cl}}}
\begin{document}

\maketitle

\begin{abstract}
    This article examines the impact of Hamiltonian dynamics on the interaction graph of Boolean networks. Three types of dynamics are considered: maximum height, Hamiltonian cycle, and an intermediate dynamic between these two. The study addresses how these dynamics influence the connectivity of the graph and the existence of variables that depend on all other variables in the system. Additionally, a family of unate Boolean networks capable of describing these three Hamiltonian behaviors is introduced, highlighting their specific properties and limitations. The results provide theoretical tools for modeling complex systems and contribute to the understanding of dynamic interactions in Boolean networks.
\end{abstract}

\section{Introduction} \label{chapter11}

Boolean networks (BNs) are a widely used mathematical model for representing complex systems composed of variables that can assume one of two possible states: \(0\) or \(1\). These networks have proven to be valuable tools in various fields such as biology \cite{thomas:hal-00087681}, genetics \cite{Huang2012, KAUFFMAN1969437, THOMAS1973563}, and social network theory \cite{ETESAMI2016100}, among others. By reducing problems to a binary context, BNs enable the modeling, simulation, and analysis of nonlinear interactions, as the study of the dynamics of systems characterized by multiple interdependent variables.

A significant portion of existing studies has focused on specific complex systems, emphasizing the interaction between variables to infer dynamic properties. Notable examples include the analysis of interaction graphs with bounded in-degree and their implications on dynamics \cite{citaa2}, the existence of fixed points \cite{aracena2008maximum, ARACENA2004277}, limit cycles \cite{PhysRevLett.94.088701, PhysRevLett.90.098701}, and the determination of the maximum length of limit cycles in certain families of BNs \cite{ARACENA2004237}.

However, most of these works rely on restrictions imposed on the interaction graph to infer dynamic properties, leaving the study of conditions induced in the interaction graph by a given dynamic largely unexplored.

The primary objective of this paper is to analyze the properties of the interaction graph induced by Hamiltonian dynamics \cite{mitesis}, characterized by a unique trajectory capable of visiting all states of the system. This analysis includes cases of maximum height, maximum limit cycle length, and dynamics intermediate to the two aforementioned cases.

Additionally, we address the problem: given a Hamiltonian digraph \(G_\Gamma\), is it possible to construct a unate BN whose dynamics is isomorphic to \(G_\Gamma\)? To understand this question, we explore certain families of BN capable of exhibiting Hamiltonian cycle behaviors and their implications for self-dual networks \cite{SATO1982201,MATAMALA1995251}. From this, we present a family of Hamiltonian unate Boolean networks, self-dual and non-neural.

This document is organized as follows: \Cref{ch2} introduces the fundamental definitions and notations. \Cref{chapter31} focuses on the analysis of maximum in-degree and connectivity in interaction graphs. In \Cref{chapter41}, a family of Hamiltonian unate BNs is introduced, and finally, \Cref{chapter51} presents the conclusions, discussing the obtained results and future work.

\section{Definitions and notation} \label{ch2}

A directed graph \(G = (V, A)\), where \(V\) is the set of vertices and \(A\) is the set of arcs. The in-degree of a vertex \(j \in V\) is denoted as \(d^-_G(j)\), and when there is no ambiguity, the subscript \(G\) is omitted. 
A path is an ordered sequence of vertices $P:v_0,v_1,\dots,v_k$ such that $(v_{i-1},v_i)\in A$ for all $i=1,\dots,k$.
A (strongly connected) {\bf component} is a maximal subset of vertices where, for any pair $u$ and $v$, there exists a path from $u$ to $v$ and vice versa. The \textbf{component graph} of $G$ is the directed acyclic graph whose vertices are the strongly connected components of $G$, and where there is an arc from $C_i$ to $C_j$ if and only if $G$ contains an arc from some $u \in C_i$ to some $v \in C_j$, with $i \neq j$.
A directed graph is said to have a \textbf{source component} if it is a component with no incoming arcs from other components of the graph, while a \textbf{sink component} is a component with no outgoing arcs to other components. Finally, two components are considered \textbf{independent} if there is no paths between them in \(G\). For a more detailed description of graph-related concepts, we recommend consulting \cite{Bang,west2001introduction}.

A \textbf{Boolean network} \(f: \{0,1\}^n \to \{0,1\}^n\), where \(n \in \mathbb{N}\), is a dynamic system defined in discrete time and space, consisting of \(n\) binary variables \(x_j\), with \(j \in [n]:= \{1,2,\ldots,n\}\). The network is described by Boolean functions $f_i: \{0,1\}^n \to \{0,1\}$, called \textbf{local activation functions}, where \(f = (f_1, f_2, \ldots, f_n)\) and \(x_j(t+1) = f_j(x(t))\) determines the temporal evolution of each variable.

The temporal evolution of the system is represented by a directed graph called the \textbf{state transition graph} or \textbf{dynamics} of \(f\), defined as follows:

\begin{equation*} \label{eq:000123}
    \Gamma(f) = (\{0,1\}^n, \{(x,f(x)) : x \in \{0,1\}^n\}).
\end{equation*}

Since \(f\) is a function, each vertex has an out-degree of one. We denote by \(\mathcal{G}(n)\) the family of digraphs isomorphic to the dynamics of some BN with \(n\) variables, described as follows:
\begin{equation*} \label{eq:000051}
    \mathcal{G}(n) = \{(V,A) \text{ digraph} : |V| = 2^n,\; \text{and for all } u \in V,\; d^+(u) = 1 \}.
\end{equation*}

The study of graphs \(G_\Gamma \in \mathcal{G}(n)\) aims to identify properties common to all BNs with dynamic behavior \(G_\Gamma\). The set of BNs whose dynamics is isomorphic to \(G_\Gamma\) is denoted by \(\mathcal{F}(G_\Gamma)\):
\begin{equation*} \label{eq:00010}
    \mathcal{F}(G_\Gamma) = \{f: f  \text{ is a Boolean network and }  \Gamma(f) \cong G_\Gamma \}.
\end{equation*}

We focus on digraphs \(G_\Gamma \in \mathcal{G}(n)\) that possess a path capable of visiting all their vertices, with the purpose of analyzing the properties of the family of BNs \(\mathcal{F}(G_\Gamma)\).

The configurations \(\vec{0}, \vec{1}, e_i \in \{0,1\}^n\) are defined as those with all zeros, all ones, and all zeros except for a one in component \(i \in [n]\), respectively. Additionally, \(\oplus\) denotes the modulo two sum operator, generalized to configurations in \(\{0,1\}^n\) by applying the operator component-wise. 

For a BN \(f\) with \(n\) variables and a configuration \(x \in \{0,1\}^n\), the following terms are defined:
A \textbf{Garden of Eden} is a configuration \(x\) such that \(f^{-1}(\{x\}) = \emptyset\). A configuration \(x\) is a \textbf{fixed point} if \(f(x) = x\), and it is \textbf{periodic} if there exists \(k \in \mathbb{N}\) such that \(f^k(x) = x\). Otherwise, \(x\) is called \textbf{transient}. A \textbf{limit cycle} is a cycle in \(\Gamma(f)\) of length at least two, and an \textbf{attractor} of the network is any fixed point or limit cycle.

Moreover, the \textbf{period} of \(f\), denoted as \(p(f)\), is the least common multiple of the lengths of all its limit cycles. The \textbf{height} of \(f\), denoted as \(h(f)\), is the smallest \(k \in \mathbb{N}\) such that, for any \(x \in \{0,1\}^n\), \(f^k(x)\) is a periodic point. Finally, a \textbf{trajectory} \(R\) of \(f\) is a path in \(\Gamma(f)\) that does not repeat arcs and \(|R|\) is the length of the trajectory, corresponding to the number of arcs.

We say that the local activation function \(f_j\) depends on the variable \(x_i\), or on the index \(i\), if there exists a configuration \(x \in \{0,1\}^n\) such that \(f_j(x) \neq f_j(x \oplus e_i)\). The \textbf{interaction graph} or \textbf{dependency graph} of the BN \(f\), denoted by \(G(f)\), is a directed graph with \(n\) vertices representing the network variables, where an edge \((i, j)\) indicates that \(f_j\) depends on \(x_i\).

Additionally, the \textbf{local interaction graph} at \(z\), denoted by \(G_z(f)\), is a subgraph of \(G(f)\) restricted to dependencies on a specific configuration \(z \in \{0,1\}^n\). The interaction and local interaction graphs are formally defined as:
\begin{align*}
    G(f) &= ([n], \{(i,j) \in [n] \times [n] : f_j \text{ depends on variable } x_i\}), \\
    G_z(f) &= ([n], \{(i,j) \in [n] \times [n] : f_j(z) \neq f_j(z \oplus e_i)\}).
\end{align*}

For any $I \subsetneq [n]$, the configuration $x_I \in \{0,1\}^{|I|}$ denotes the projection of $x$ onto the components indexed by $I$, while $\overline{x}^I \in \{0,1\}^n$ represents the negation of $x$ specifically on those components. In particular, we define the vectors with alternating values as $\overline{0}^E$ and $\overline{1}^E$, where $E = \{i \in \{1, \ldots, n\} : i \text{ is even}\}$.

A source component of $G(f)$ induces a dynamic that can be analyzed independently of the rest of the network. Let \(f\) be a BN of \(n\) variables, \(x \in \{0,1\}^n\), and \(I \subsetneq [n]\) a set that induces a source component in \(G(f)\). The \textbf{subnetwork of \(f\)} induced by \(I\) is the BN \(f_I : \{0,1\}^{|I|} \to \{0,1\}^{|I|}\), defined as \(f_I(x) = f(y)_I\), for any \(y \in \{0,1\}^{n}\) such that $y_I=x$.

\begin{example} \label{eje0003}
    Let \(f\) be the Boolean network of \(3\) variables described by the local activation functions and the network dynamics presented in \Cref{fig:inter}.

    In this case, the configurations \((1, 0, 0)\) and \((1, 0, 1)\) are Garden of Eden states, while \((0, 0, 0)\) and \((1, 1, 1)\) are fixed points. The limit cycle of the BN is \([(0, 1, 1),\) \((0, 0, 1)]\), with a period \(p(f) = 2\) and a height \(h(f) = 3\).
    \begin{align}
        f_3(1,1,1)&\neq f_3((1,1,1)\oplus e_3). \label{eq:00026}
    \end{align}
    Observe that \(f_1\) and \(f_2\) depend on all variables. On the other hand, (\ref{eq:00026}) shows that the local activation function \(f_3\) depends only on variable \(x_3\), since for any other index \(i \neq 3\), \(f_3(x) = f_3(x \oplus e_i)\) holds. The interaction graph \(G(f)\) along with $G_{(1,1,0)}(f)$ is shown in \Cref{fig:inter}.

    \begin{figure}[htb!]
    \begin{minipage}[htb!]{0.48\textwidth} 
        \centering
        \begin{align*}
            f_1(x)&=(x_1 \land \overline{x}_2\land \overline{x}_3) \lor (x_1\land x_2\land x_3)  \\
            f_2(x)&=(x_1 \land  x_3) \lor (\overline{x}_2 \land  x_3) \lor (x_1 \land  \overline{x}_2)  \\
            f_3(x)&= x_3
        \end{align*}
    \end{minipage}
    \hfill
    \begin{minipage}[htb!]{0.45\textwidth}
        \centering
        \begin{tikzpicture}[scale=0.85,every node/.style={circle, draw=black!80, fill=white, text=black, minimum size=6pt,inner sep=2pt,outer sep=2pt}]
            \node[fill opacity=0,draw opacity=0,text opacity=1] at (-110pt,40pt) {$\Gamma(f):$};
            \node (1) at (-80pt,20pt) {$100$};
            \node (2) at (-40pt,20pt) {$110$};
            \node (3) at (0pt,20pt) {$010$};
            \node (4) at (40pt,20pt) {$000$};
            \node (5) at (-80pt,-20pt) {$111$};
            \node (6) at (-40pt,-20pt) {$101$};
            \node (7) at (0pt,-20pt) {$011$};
            \node (8) at (40pt,-20pt) {$001$};
            \path[very thick, -to]
            (1) edge[black] (2)
            (2) edge[black] (3)
            (3) edge[black] (4)
            (6) edge[black] (7)
            (7) edge[bend right=15,black] (8)
            (8) edge[bend right=15,black] (7)
            (5.-170.00) edge[controls=+(-140.00:30pt)and +(140.00:30pt),black] (5.170.00)
            (4.-10.00) edge[controls=+(-40.00:30pt)and +(40.00:30pt),black] (4.10.00)
            ;
        \end{tikzpicture}
        \end{minipage}
        \centering
        \begin{tikzpicture}[scale=0.85,every node/.style={circle, draw=black!60, fill=white, text=black, minimum size=6pt,inner sep=4pt,outer sep=2pt}]
            \node[fill opacity=0,draw opacity=0,text opacity=1] at (-150pt,30pt) {$G(f):$};
            \node (0) at (-90pt,20pt) {$1$};
            \node (1) at (-125pt,-35pt) {$2$};
            \node (2) at (-60pt,-35pt) {$3$};
            \node[fill opacity=0,draw opacity=0,text opacity=1] at (30pt,30pt) {$G_{(1,1,0)}(f):$};
            \node (3) at (90pt,20pt) {$1$};
            \node (4) at (60pt,-35pt) {$2$};
            \node (5) at (125pt,-35pt) {$3$};
            \path[ultra thick, -to]
            (2.-50.00) edge[controls=+(-70.00:30pt)and +(20.00:30pt),black] (2.-10.00)
            (0) edge[bend right=15,black] (1)
            (1) edge[bend right=15,black] (0)
            (0.70.00) edge[controls=+(40.00:30pt)and +(130.00:30pt),black] (0.110.00)
            (2) edge[bend right=15,black] (0)
            (2) edge[bend right=15,black] (1)
            (1.-160.00) edge[controls=+(-190.00:30pt)and +(-100.00:30pt),black] (1.-120.00)
            (5.-50.00) edge[controls=+(-70.00:30pt)and +(20.00:30pt),black] (5.-10.00)
            (4) edge[bend right=15,black] (3)
            (3) edge[bend right=15,black] (4)
            (5) edge[bend right=15,black] (3)
            ;
        \end{tikzpicture}
        \caption{Boolean network \(f\) from \Cref{eje0003}.}
        \label{fig:inter}
    \end{figure}
\end{example}

\begin{definition} \label{def:00021}
    For \(x, y \in \{0,1\}^n\), we define \(x \leq y\) if, for every component, \(x_i \leq y_i\) holds. 
    Given \(f\), a Boolean network with \(n \in \mathbb{N}\) variables, the local activation function \(f_j\) is said to be:
    \begin{itemize}
        \item \textbf{Increasing in component} \(i \in [n]\), if for any configuration \(x\) such that \(x_i = 0\), it holds that \(f_j(x) \leq f_j(x \oplus e_i)\).
        \item \textbf{Decreasing in component} \(i \in [n]\), if for any configuration \(x\) such that \(x_i = 0\), it holds that \(f_j(x) \geq f_j(x \oplus e_i)\).
    \end{itemize}
    Additionally, \(f_j\) is called \textbf{unate} if it is either increasing or decreasing in each of its components \(i \in [n]\). 
    The network is said to be \textbf{unate} if all its local activation functions are unate.
    Furthermore, the network is said to be \textbf{monotone} if all its local activation functions are increasing in each of its components \(i \in [n]\). The local activation function \(f_j\) is called {\bf threshold} (or linearly separable) if and only if exist real numbers $a_1,a_2,...,a_n,b$ such that $f=1$ if the sum of all $a_ix_i$, $1\leq i\leq n$, is greater than or equal to $b$, and $f=0$ otherwise. A {\bf neural network} is a Boolean network where all of its local activation functions are thresholds.
\end{definition}

The arcs of the interaction graph of a unate BN $f$ can be labeled with \textbf{signs} \(\sigma_f(i, j) \in \{+1, -1\}\), which indicate the nature of the relationship between the variables. A positive sign (\(+1\)) implies that \(f_j\) is increasing with respect to \(x_i\), while a negative sign (\(-1\)) indicates that \(f_j\) is decreasing with respect to \(x_i\). The above definition generalizes to the local interaction graph $G_z(f)$.

\begin{definition} \label{defi0010}
    A Boolean network \(f\) is said to be \textbf{Hamiltonian} if its dynamics possess a trajectory that reaches all configurations, and hence it has a unique attractor. A Hamiltonian Boolean network \(f\) is classified as \textbf{maximum height} if its only attractor is a fixed point; \textbf{intermediate height} if its unique attractor is a limit cycle of length \(k \in \{2, 3, \ldots, 2^n - 1\}\); or a \textbf{Hamiltonian cycle} if its dynamics form a limit cycle of length \(2^n\).
    Similarly, a digraph \(G_\Gamma \in \mathcal{G}(n)\) is classified as a Hamiltonian of maximum height, intermediate height, or Hamiltonian cycle if \(G_\Gamma \cong \Gamma(f)\) and \(f\) belong to the corresponding classification.
\end{definition}

\begin{example} \label{eje0004}
    Let \(f = (f_1, f_2, f_3)\) be a Boolean network with local activation functions and the interaction graph described in \Cref{fig:altej}. The Boolean network is Hamiltonian of maximum height, and its dynamics is shown in \Cref{fig:altej}.
    
    \begin{figure}[htb!]
    \begin{minipage}[htb!]{0.47\textwidth}
        \centering
        \begin{align*}
            f_1(x)&=x_3  \\
            f_2(x)&=\overline{x}_1 \lor(x_1 \land x_2 \land \overline{x}_3)  \\
            f_3(x)&=(x_2 \land x_3)\lor( \overline{x}_2 \land \overline{x}_3)
        \end{align*}
    \end{minipage}
    \hfill
    \begin{minipage}[htb!]{0.45\textwidth}
        \centering
        \begin{tikzpicture}[scale=0.85,every node/.style={circle, draw=black!60, fill=white, text=black, minimum size=6pt,inner sep=4pt,outer sep=2pt}]
            \node[draw=white, fill=white,text=black](n) at (-80pt,35pt) {$G(f):$};
            \node (0) at (35pt,-35pt) {$3$};
            \node (1) at (0pt,25pt) {$1$};
            \node (2) at (-35pt,-35pt) {$2$};
            \path[ultra thick, -to]
            (0.-50.00) edge[controls=+(-70.00:30pt)and +(20.00:30pt),black] (0.-10.00)
            (0) edge[bend right=15,black] (1)
            (1) edge[bend right=15,black] (2)
            (2.-160.00) edge[controls=+(-190.00:30pt)and +(-100.00:30pt),black] (2.-120.00)
            (0) edge[bend right=15,black] (2)
            (2) edge[bend right=15,black] (0)
            ;
        \end{tikzpicture}
        \end{minipage}
        \centering
        \begin{tikzpicture}[scale=1,every node/.style={circle, draw=black!60, fill=white, text=black, minimum size=6pt,inner sep=2.5pt,outer sep=2pt}]
            \node[draw=white, fill=white,text=black](n) at (-200pt,40pt) {$\Gamma(f):$};
            \node (1) at (-160pt,0pt) {$000$};
            \node (2) at (-120pt,0pt) {$011$};
            \node (3) at (-80pt,0pt) {$111$};
            \node (4) at (-40pt,0pt) {$101$};
            \node (5) at (0pt,0pt) {$100$};
            \node (6) at (40pt,0pt) {$001$};
            \node (7) at (80pt,0pt) {$110$};
            \node (8) at (120pt,0pt) {$010$};
            \path[very thick, -to]
            (1) edge[black] (2)
            (2) edge[black] (3)
            (3) edge[black] (4)
            (4) edge[black] (5)
            (5) edge[black] (6)
            (6) edge[black] (7)
            (7) edge[black] (8)
            (8.-10.00) edge[controls=+(-30.00:30pt)and +(40.00:30pt),black] (8.10.00)
            ;
        \end{tikzpicture}
        \caption{Boolean network \(f\) from \Cref{eje0004}.}
        \label{fig:altej}
    \end{figure}
\end{example}

\begin{example} \label{eje0005}
    Given the Boolean network \(f = (f_1, f_2, f_3)\) defined in \Cref{fig:hibridej}, along with its interaction graph, we observe that the network is Hamiltonian of intermediate height, as reflected in its dynamics described in \Cref{fig:hibridej}.
    
    \begin{figure}[htb!]
    \begin{minipage}[htb!]{0.47\textwidth} 
        \centering
        \begin{align*}
            f_1(x)=&\overline{x}_1 , \\
            f_2(x)=&(\overline{x}_2 \land x_3) \lor (\overline{x}_1 \land x_3)\lor
            (x_1\land x_2) ,\\
            f_3(x)=& \overline{x}_2 .
        \end{align*}
    \end{minipage}
    \hfill
    \begin{minipage}[htb!]{0.45\textwidth}
        \centering
        \begin{tikzpicture}[scale=0.85,every node/.style={circle, draw=black!60, fill=white, text=black, minimum size=6pt,inner sep=4pt,outer sep=2pt}]
            \node[draw=white, fill=white,text=black](n) at (-75pt,50pt) {$G(f):$};
            \node (0) at (35pt,-35pt) {$3$};
            \node (1) at (0pt,25pt) {$1$};
            \node (2) at (-35pt,-35pt) {$2$};
            \path[ultra thick, -to]
            (1.70.00) edge[controls=+(40.00:30pt)and +(130.00:30pt),black] (1.110.00)
            (2.-160.00) edge[controls=+(-190.00:30pt)and +(-100.00:30pt),black] (2.-120.00)
            (1) edge[bend right=15,black] (2)
            (2) edge[bend right=15,black] (0)
            (0) edge[bend right=15,black] (2)
            ;
        \end{tikzpicture}
    \end{minipage}
        \centering
        \begin{tikzpicture}[scale=0.9,every node/.style={circle, draw=black!60, fill=white, text=black, minimum size=6pt,inner sep=2.5pt,outer sep=2pt}]
            \node[draw=white, fill=white,text=black](n) at (-170pt,50pt) {$\Gamma(f):$};
            \node (1) at (-140pt,0pt) {$000$};
            \node (2) at (-100pt,0pt) {$101$};
            \node (3) at (-60pt,0pt) {$011$};
            \node (4) at (-20pt,0pt) {$110$};
            \node (5) at (20pt,0pt) {$010$};
            \node (6) at (60pt,30pt) {$111$};
            \node (7) at (100pt,0pt) {$001$};
            \node (8) at (60pt,-30pt) {$100$};
            \path[very thick, -to]
            (1) edge[black] (2)
            (2) edge[black] (3)
            (3) edge[black] (4)
            (4) edge[black] (5)
            (5) edge[bend right=30,black] (8)
            (8) edge[bend right=30,black] (7)
            (7) edge[bend right=30,black] (6)
            (6) edge[bend right=30,black] (5)
            ;
        \end{tikzpicture}
        \caption{Boolean network \(f\) from \Cref{eje0005}.}
        \label{fig:hibridej}
    \end{figure}
\end{example}

\begin{example} \label{eje0006}
    Given \(f = (f_1, f_2, f_3)\), described by the local activation functions and the interaction graph shown in \Cref{figciclej}, it follows that \(f\) is Hamiltonian with a cycle, as its dynamics form a cycle of length \(2^3\), as depicted in \Cref{figciclej}.
    
    \begin{figure}[htb!]
    \begin{minipage}[htb!]{0.4\textwidth} 
        \centering
        \begin{align*}
            f_1(x)=&\overline{x}_1,  \\
            f_2(x)=&(\overline{x}_2 \land x_3) \lor (\overline{x}_1 \land x_3)\lor
            (x_1\land x_2\land \overline{x}_3), \\
            f_3(x)=&\overline{x}_2 .
        \end{align*}
    \end{minipage}
    \hfill
    \begin{minipage}[h]{0.4\textwidth}
        \centering
        \begin{tikzpicture}[scale=0.85,every node/.style={circle, draw=black!60, fill=white, text=black, minimum size=6pt,inner sep=4pt,outer sep=2pt}]
            \node[draw=white, fill=white,text=black](n) at (-65pt,40pt) {$G(f):$};
            \node (0) at (35pt,-35pt) {$3$};
            \node (1) at (0pt,25pt) {$1$};
            \node (2) at (-35pt,-35pt) {$2$};
            \path[ultra thick, -to]
            (1.70.00) edge[controls=+(40.00:30pt)and +(130.00:30pt),black] (1.110.00)
            (2.-160.00) edge[controls=+(-190.00:30pt)and +(-100.00:30pt),black] (2.-120.00)
            (1) edge[bend right=15,black] (2)
            (2) edge[bend right=15,black] (0)
            (0) edge[bend right=15,black] (2)
            ;
        \end{tikzpicture}
    \end{minipage}
        \centering
        \begin{tikzpicture}[scale=0.85,every node/.style={circle, draw=black!60, fill=white, text=black, minimum size=6pt,inner sep=2.5pt,outer sep=2pt}]
            \node[draw=white, fill=white,text=black](n) at (-90pt,60pt) {$\Gamma(f):$};
            \node (1) at (0pt,60pt) {$000$};
            \node (2) at (-45pt,45pt) {$101$};
            \node (3) at (-65pt,0pt) {$011$};
            \node (4) at (-45pt,-45pt) {$110$};
            \node (5) at (0pt,-60pt) {$010$};
            \node (6) at (45pt,-45pt) {$100$};
            \node (7) at (65pt,0pt) {$001$};
            \node (8) at (45pt,45pt) {$111$};
            \path[very thick, -to]
            (1) edge[bend right=15,black] (2)
            (2) edge[bend right=15,black] (3)
            (3) edge[bend right=15,black] (4)
            (4) edge[bend right=15,black] (5)
            (5) edge[bend right=15,black] (6)
            (6) edge[bend right=15,black] (7)
            (7) edge[bend right=15,black] (8)
            (8) edge[bend right=15,black] (1)
            ;
        \end{tikzpicture}
        \caption{Boolean network \(f\) from \Cref{eje0006}.}
        \label{figciclej}
    \end{figure}
\end{example}

Our objective is on the properties of a BN with Hamiltonian dynamic. Next, we define a partition of the configuration space to identify patterns in the variable behavior that will be useful in the following results.

\begin{definition}[\cite{crama2011boolean}] \label{defin0002}
    Given \(f\), a Boolean network with \(n\) variables, and \(j \in [n]\), the set \(T(f_j) \subseteq \{0,1\}^n\) is defined as the set of true points, and \(F(f_j) \subseteq \{0,1\}^n\) as the set of false points of \(f_j\). These sets are defined as follows:
    \begin{align}
        T(f_j) &= \{x \in \{0,1\}^n : f_j(x) = 1 \}, \label{ccp} \\
        F(f_j) &= \{x \in \{0,1\}^n : f_j(x) = 0 \}. \label{ccn}
    \end{align}
    If, for every \(j \in [n]\), it holds that \(|T(f_j)| = |F(f_j)| = 2^{n-1}\), the Boolean network \(f\) is said to be \textbf{balanced}.
\end{definition}

The sets of true and false points establish a partition of \(\{0,1\}^n\), implying that for any \(j \in [n]\), \(|T(f_j)| + |F(f_j)| = 2^n\) holds. Henceforth, results concerning the set \(T\) will also apply to the set of false points \(F\).

\section{In-degree and connectivity of the interaction graph} \label{chapter31}

In \cite{WOS:001028245500001}, it is established that any digraph $G_\Gamma \in \mathcal{G}(n)$, other than the constant and identity digraphs, admits a BN $f \in \mathcal{F}(G_\Gamma)$ whose interaction graph is $K_n$, a complete digraph including loops. Furthermore, \cite{bridoux20244} demonstrates the existence of dynamics that impose constraints on the interaction graph, such as those involving a certain edge density or requiring a minimum in-degree of 2, among others. To further explore this direction, \Cref{lemaimpar} establishes a necessary condition regarding the in-degree of the interaction graph, thereby linking BNs to their dynamical behavior.

\begin{lemma}[\cite{10.1093/pnasnexus/pgac017,mitesis}] \label{lemaimpar}
    Let \(f\) be a Boolean network with \(n \in \mathbb{N}\) variables and \(j \in [n]\) such that \(|T(f_j)|\) is odd. Then, the in-degree of vertex \(j\) in the interaction graph is \(n\).
\end{lemma}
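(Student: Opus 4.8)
The plan is to prove the contrapositive: I will show that if $f_j$ fails to depend on some variable, then $\abs{T(f_j)}$ is even. Since the in-degree $d^-_G(j)$ equals the number of indices $i \in [n]$ on which $f_j$ depends, establishing that an odd $\abs{T(f_j)}$ forces dependence on every single variable is exactly what is needed to conclude $d^-_G(j) = n$.

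First I would assume that $f_j$ does not depend on the variable $x_i$ for some fixed $i \in [n]$. By the definition of dependence this means $f_j(x) = f_j(x \oplus e_i)$ for every $x \in \qn$; in other words, $f_j$ takes the same value at $x$ and at its $i$-th neighbor.

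Next I would exploit the involution $\phi \colon \qn \to \qn$ given by $\phi(x) = x \oplus e_i$. Because $i \in [n]$ we have $e_i \neq \vec{0}$, so $\phi(x) \neq x$ for all $x$ and $\phi = \phi^{-1}$; hence $\phi$ is fixed-point-free and partitions $\qn$ into exactly $2^{n-1}$ blocks $\{x, x \oplus e_i\}$, each of size two. By the previous step, on every such block the value of $f_j$ is constant, so each block contributes either $0$ or $2$ to the cardinality $\abs{T(f_j)}$. Summing over the $2^{n-1}$ blocks expresses $\abs{T(f_j)}$ as a sum of even numbers, whence it is even.

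Taking the contrapositive then yields the lemma: if $\abs{T(f_j)}$ is odd, no variable can be missing, so $f_j$ depends on all of $x_1, \dots, x_n$, every arc $(i,j)$ lies in $G(f)$, and therefore $d^-_G(j) = n$. I do not expect a genuine obstacle here, as the argument is a parity count; the only point deserving care is verifying that $x \mapsto x \oplus e_i$ is a fixed-point-free involution, which is precisely what guarantees the clean splitting of $\qn$ into two-element blocks and hence the even contributions to $\abs{T(f_j)}$.
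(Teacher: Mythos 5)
Your proof is correct and follows essentially the same argument as the paper: both observe that if $f_j$ were independent of $x_i$, the fixed-point-free involution $x \mapsto x \oplus e_i$ would pair up the elements of $T(f_j)$ and force even cardinality, contradicting (or, in your contrapositive formulation, precluding) oddness. Your version merely spells out the pairing argument more explicitly than the paper does.
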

\begin{proof}
    Suppose \(|T(f_j)|\) is odd and that \(f_j\) does not depend on the index \(i \in [n]\). From the definition of dependence, it follows that for any configuration \(x \in T(f_j)\), \mbox{\(x \oplus e_i \in T(f_j)\)}. This contradicts the hypothesis that \(|T(f_j)|\) is odd, completing the proof.
\end{proof}

Importantly, the converse implication of \Cref{lemaimpar} is not true: in \Cref{eje0006}, \(T(f_2)\) is a set of even cardinality, although variable \(2\) has maximum in-degree. 
Conversely, if $G_\Gamma \in \mathcal{G}(n)$ possesses a unique vertex with in-degree zero, the associated dynamics deviate from bijective behavior by exactly one image. This observation motivates \Cref{prop:prop507}, as previously established in \cite{mitesis,citaa2}.

\begin{proposition} \label{prop:prop507}
    If \(G_\Gamma \in \mathcal{G}(n)\) (not necessarily connected) has exactly one vertex with in-degree zero, then 
    for any Boolean network \(f \in \mathcal{F}(G_\Gamma)\), there exists a component \(j \in [n]\) such that \(d^{-}(j) = n\) in its interaction graph.
\end{proposition}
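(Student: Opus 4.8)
The plan is to work directly inside $\Gamma(f)$ and reduce the statement to a parity computation that feeds into \Cref{lemaimpar}. Since ``having exactly one vertex of in-degree zero'' is an isomorphism invariant, $\Gamma(f)$ itself has exactly one such vertex; call it $g$, the unique Garden of Eden of $f$. Consequently $f$ omits exactly one point of $\qn$ from its image, so $|f(\qn)| = 2^n - 1$.

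First I would pin down the entire in-degree sequence of $\Gamma(f)$. The sum of all in-degrees equals $2^n$ (every vertex has out-degree one), the vertex $g$ contributes $0$, and each of the remaining $2^n - 1$ vertices has in-degree at least $1$ (none of them is a Garden of Eden). Since $(2^n-1)\cdot 1 = 2^n - 1$, there is exactly one unit of slack, which forces a single vertex $w \neq g$ to have in-degree $2$ while all others have in-degree $1$.

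Next I would tie the coordinates of $g$ and $w$ to the counts $|T(f_j)|$. For any $j \in [n]$ we have $T(f_j) = f^{-1}(\{y : y_j = 1\})$, hence
\begin{equation*}
    |T(f_j)| = \sum_{y\, :\, y_j = 1} d^-_{\Gamma(f)}(y).
\end{equation*}
There are $2^{n-1}$ vertices $y$ with $y_j = 1$; comparing their true in-degrees against the baseline value $1$ changes the sum only by $-1$ at $g$ (when $g_j = 1$) and by $+1$ at $w$ (when $w_j = 1$). Thus $|T(f_j)|$ equals $2^{n-1}$ adjusted by these two terms, and for $n \geq 2$, where $2^{n-1}$ is even, the quantity $|T(f_j)|$ is odd precisely when $g_j \neq w_j$.

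Finally, since $g \neq w$ there is at least one coordinate $j \in [n]$ with $g_j \neq w_j$; for that $j$ the cardinality $|T(f_j)|$ is odd, and \Cref{lemaimpar} yields $d^-(j) = n$ in $G(f)$, as claimed. I expect the only genuine subtlety to be the forced in-degree count in the second step, together with noting that the case $n = 1$ is degenerate (e.g.\ a constant map has one Garden of Eden yet no variable of in-degree $1$), so $n \geq 2$ is the meaningful regime; the remaining steps are straightforward parity bookkeeping.
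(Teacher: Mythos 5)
Your proof is correct and follows essentially the same route as the paper's: both pin down the in-degree sequence of $\Gamma(f)$ (one vertex of in-degree $0$, one of in-degree $2$, the rest $1$), observe that $|T(f_j)|$ equals $2^{n-1}\pm 1$ at any coordinate $j$ where the Garden of Eden and the doubly-covered configuration disagree, and invoke \Cref{lemaimpar}. Your side remark that $n=1$ is degenerate (the constant map has a unique Garden of Eden but an edgeless interaction graph) is a valid edge case that the paper's proof leaves implicit.
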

\begin{proof}
    Suppose \(G_\Gamma\) has exactly one vertex with in-degree zero. By definition, the out-degree of every vertex in \(G_\Gamma\) is one. 
    Consequently, \(2^n-2\) vertices have in-degree one, one vertex has in-degree zero, and one vertex has in-degree two.
    
    For an arbitrary Boolean network \(f \in \mathcal{F}(G_\Gamma)\), denote \(u, v \in \{0,1\}^n\) as the Garden of Eden and the configuration 
    with two preimages, respectively. Since these are distinct configurations, there exists a component \(j \in [n]\) such that \(u_j \neq v_j\). 
    Let us analyze the cases based on the value of component \(j\) of \(u\).

    \begin{itemize}
        \item If \(u_j = 1\), given that the dynamics have \(2^n-2\) configurations with exactly one preimage and \(v_j = \overline{u_j} = 0\), 
        it follows that \(|T(f_j)| = 2^{n-1} - 1\), which is odd. 
        \item If \(u_j = 0\), it follows that \(|T(f_j)| = 2^{n-1} + 1\), which is also odd.
    \end{itemize}
    In both cases, \(T(f_j)\) has odd cardinality, and by \Cref{lemaimpar}, it is concluded that for any Boolean network \(f \in \mathcal{F}(G_\Gamma)\), there exists a vertex \(j \in [n]\) with in-degree \(d^{-}(j) = n\) in \(G(f)\).
\end{proof}

\Cref{prop:prop507} provides a sufficient condition to guarantee the existence of a variable with in-degree \(n\) in \(G(f)\) which is satisfied by Hamiltonian BNs of maximum and intermediate height. However, this result does not apply to Hamiltonian cycle, as there exists a counterexample\footnote{Courtesy of \textbf{Florian Bridoux} (\textbf{Univ. Côte d’Azur, CNRS, I3S UMR 7271, Sophia Antipolis, France}), personal communication, \textbf{2024}.} where no variable reaches this degree for $n= 5$ and has the following local activation functions.
\begin{align*}
    f_1(x)=&(\lnot x_2\land \lnot x_3)
    \lor(\lnot x_4\land x_5)
    \lor(\lnot x_3\land x_4\land \lnot x_5), \\
    f_2(x)=&(\lnot x_1\land \lnot x_3)
    \lor(x_1\land x_3\land x_4)
    \lor(\lnot x_1\land \lnot x_4\land x_5)
    \lor(\lnot x_1\land x_4\land \lnot x_5)\lor\\ 
    &(x_3\land x_4\land \lnot x_5), \\
    f_3(x)=&(\lnot x_1\land \lnot x_2\land x_4)
    \lor(x_1\land x_2\land x_5)
    \lor(\lnot x_1\land \lnot x_2\land x_5)
    \lor(x_1\land \lnot x_2\land \lnot x_5)
    \\ 
    &\lor(\lnot x_2\land x_4\land \lnot x_5)
    \lor(\lnot x_1\land x_2\land \lnot x_4\land \lnot x_5), \\
    f_4(x)=&(\lnot x_1\land x_2\land x_3)
    \lor(x_1\land \lnot x_2\land x_3)
    \lor(\lnot x_1\land x_2\land \lnot x_5)
    \lor(x_1\land \lnot x_2\land \lnot x_5)
    \\ 
    &\lor(x_1\land \lnot x_3\land \lnot x_5)
    \lor(x_2\land \lnot x_3\land \lnot x_5) 
    \lor(\lnot x_1\land \lnot x_2\land \lnot x_3\land x_5), \\
    f_5(x)=&(x_2\land x_3)
    \lor(x_1\land x_2\land x_4)
    \lor(\lnot x_1\land x_2\land \lnot x_4)
    \lor(x_1\land \lnot x_3\land x_4)
    \lor\\ 
    &(\lnot x_2\land \lnot x_3\land x_4). 
\end{align*}

The proof of \Cref{prop:prop507} allows the identification of variables with in-degree \(n\), using the labels assigned to both the Garden of Eden and the configuration with an odd number of preimages.

\begin{remark} \label{rmk:garden}
    By \Cref{prop:prop507} any Boolean network with a unique Garden of Eden has a connected interaction graph, as is the case of maximum and intermediate height Hamiltonian dynamics. 
\end{remark}

In Hamiltonian Boolean networks, there is a relationship between global dynamics and the induced subnetworks. 
In particular, this relationship extends to any \(G_\Gamma \in \mathcal{G}(n)\) that has a sufficiently long trajectory.

\begin{lemma} \label{prop799}
    Let \(f\) be a Boolean network with \(n \geq 2\) variables, and let \(I \subsetneq [n]\) be a non-empty set that induces a subnetwork of \(f\), that is, $I$ induces a source component of $G(f)$. 
    If the network has a trajectory in its dynamics of length greater than \(2^n - 2^{n - |I|}+1\), then \(f_I\) is Hamiltonian cycle.
\end{lemma}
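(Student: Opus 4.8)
The plan is to exploit that $I$ induces a source component, so that the projection $\pi:\qn\to\{0,1\}^{|I|}$, $\pi(x)=x_I$, intertwines the two dynamics. First I would observe that, since no arc of $G(f)$ enters $I$ from another component, each $f_i$ with $i\in I$ depends only on the coordinates indexed by $I$; hence $f(x)_I=f_I(x_I)$ and $\pi(f(x))=f_I(\pi(x))$. Thus $\pi$ carries every arc $(x,f(x))$ of $\Gamma(f)$ to the arc $(\pi(x),f_I(\pi(x)))$ of $\Gamma(f_I)$, and the image of a trajectory $R:x_0,\dots,x_L$ is the orbit $y_t=f_I^{\,t}(y_0)$ with $y_t=\pi(x_t)$. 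Each fiber $\pi^{-1}(y)$ contains exactly $2^{n-|I|}$ configurations, a fact I will use repeatedly.

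Next I would prove that $f_I$ is Hamiltonian by a counting argument. Since $R$ is a path, its $L+1$ states are pairwise distinct, so $R$ misses exactly $2^n-(L+1)$ configurations; the hypothesis $|R|>2^n-2^{n-|I|}$ forces this quantity to be at most $2^{n-|I|}-2$, strictly below the size $2^{n-|I|}$ of a single fiber. Consequently $R$ meets every fiber: if some $y^\ast\in\{0,1\}^{|I|}$ were never attained by the orbit, then all $2^{n-|I|}$ states of $\pi^{-1}(y^\ast)$ would be missed, a contradiction. Hence the orbit $(y_t)$ visits all of $\{0,1\}^{|I|}$, so $f_I$ is Hamiltonian. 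Writing its unique orbit as a tail of length $m$ followed by a cycle of length $p$, connectivity gives $m+p=2^{|I|}$.

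It remains to show that the tail is empty, i.e. $m=0$, which I expect to be the crux. Each of the $m$ tail configurations $y_0,\dots,y_{m-1}$ is attained by the orbit exactly once, so $R$ contains at most one state in each of the $m$ corresponding fibers; thus at least $m(2^{n-|I|}-1)$ configurations are missed. Combining this with the bound $2^{n-|I|}-2$ on the number of missed configurations yields $m(2^{n-|I|}-1)\le 2^{n-|I|}-2$, which (using $|I|<n$, so $2^{n-|I|}\ge 2$) is impossible for $m\ge 1$. Therefore $m=0$, the orbit of $f_I$ is a single cycle of length $p=2^{|I|}$, and $f_I$ is a Hamiltonian cycle.

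The delicate point, and the main obstacle, is precisely this tail-elimination step: the coverage argument by itself only certifies that $f_I$ is Hamiltonian, and a transient of length one survives all of the cruder counts. It is essential both that the trajectory consists of distinct states (so each fiber absorbs at most $2^{n-|I|}$ of them and each once-visited tail configuration wastes $2^{n-|I|}-1$ of a fiber's capacity) and that the length bound is strict; these two facts are exactly what is needed to drive $m$ below $1$.
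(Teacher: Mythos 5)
Your argument follows the same fiber-counting strategy as the paper's proof, but it is organized differently and is, at the decisive point, more careful. The paper takes the initial projected state $z_I$, lets $k$ be its first return time under $f_I$, and concludes $k>2^{|I|}-1$ from the single chain $2^n-2^{n-|I|}<|P|<k\cdot 2^{n-|I|}$; this silently assumes that $z_I$ is a periodic point of $f_I$, i.e.\ that the tail you call $m$ is already empty, which is precisely the case that needs work. You instead first prove surjectivity of the projected orbit (so $f_I$ is Hamiltonian) and then run a second, sharper count --- each once-visited tail value wastes $2^{n-|I|}-1$ units of its fiber's capacity --- to force $m=0$. That second count supplies exactly what the paper's proof leaves implicit, and your identification of tail elimination as the crux is on target.

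The one step you should not take for granted is the assertion that the $L+1$ states of the trajectory are pairwise distinct. Under the paper's definitions a trajectory is only required not to repeat arcs; since $\Gamma(f)$ has out-degree one, this forces $x_0,\dots,x_{L-1}$ to be distinct, but the final vertex $x_L$ may coincide with an earlier one (the trajectory may close up into a cycle, and the paper relies on such closed trajectories in its quasi-Hamiltonian application). With only $L$ distinct states your bound on the number of missed configurations weakens to $2^{n-|I|}-1$, and the tail inequality then yields only $m\le 1$. A tail of length one can genuinely survive: for $n=2$, take $f_1$ constantly equal to $1$ and $f_2(x)=x_1\wedge\overline{x}_2$, so that $I=\{1\}$ induces a source component, $00\to 10\to 11\to 10$ is a trajectory of length $3>2^2-2^{2-1}$, yet $f_I$ is constant and hence not a Hamiltonian cycle. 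So either the distinctness of all $L+1$ states must be secured (it holds whenever the trajectory does not re-enter itself, in particular for the simple Hamiltonian paths used elsewhere in the paper), or the case $x_L=x_i$ needs separate treatment. This is not a defect relative to the paper --- its own proof breaks at the same place, and earlier --- but it is the load-bearing assumption in your argument and should be stated explicitly.
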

\begin{proof}
 
    Let $P$ be a trajectory in $\Gamma(f)$ of length exceeding $2^n - 2^{n - |I|}+1$, and let $z \in \{0,1\}^n$ denote its initial configuration. By contradiction, assume there exists no $k \in \mathbb{N}$ such that $(f_I)^k(z_I) = z_I$. This implies that the trajectory $P$ contains no configuration $w \neq z \in \{0,1\}^n$ satisfying $w_I = z_I$. However, this would mean that $P$ does not contain $2^{n-|I|} - 1$ configurations, leading to $|P| \leq 2^n - 2^{n-|I|} + 1$, which contradicts our initial assumption.

    Since the cycle length in $f_I$ is at most $2^{|I|}$, let $k \in \mathbb{N}$ be the minimal value such that $(f_I)^k(z_I) = z_I$, where $k \leq 2^{|I|}$. We shall demonstrate that $k = 2^{|I|}$. Given that $P$ contains no repeated internal vertices and the projection $z_I$ occurs at most $2^{n - |I|}$ times along the trajectory, the length of $P$ is bounded as follows:
    \begin{equation} \label{eq:001231}
        2^n - 2^{n - |I|}+1 < |P| < k \cdot 2^{n - |I|}+1.
    \end{equation}
    Noting that $2^n - 2^{n - |I|} = (2^{|I|} - 1) \cdot 2^{n - |I|}$, it follows from (\ref{eq:001231}) that $k > 2^{|I|} - 1$. Since $k$ is an integer, this necessarily implies $k = 2^{|I|}$, which shows that $f_I$ induces a Hamiltonian cycle.
\end{proof}

Dynamics that describe a long limit cycle are of interest, for example, in cryptography (\cite{citaa2,6897028}) where the generation of pseudorandom sequences with low computational cost is essential. In this context, linear local activation functions and the {\bf quasi-Hamiltonian} Boolean networks are described, whose dynamics consist of a cycle of length \(2^n-1\) and a fixed point. Although the dynamics of this network is neither strictly Hamiltonian nor connected, it satisfies \Cref{prop799} and is of theoretical interest for being modelable with a bounded in-degree interaction graph \cite{citaa2}. \Cref{fig:ej0002} describe an example of these dynamics.

\begin{figure}[htb!]
    \begin{minipage}[htb!]{0.53\textwidth} 
        \centering
        \begin{align*}
            f_1(x)=&(x_2\land \overline{x}_3) \lor (\overline{x}_2\land x_3), \\
            f_2(x)=&x_3, \\
            f_3(x)=&x_1.
        \end{align*}
    \end{minipage}
    \hfill
    \begin{minipage}[htb!]{0.45\textwidth}
        \centering
        \begin{tikzpicture}[scale=0.9,every node/.style={circle, draw=black!60, fill=white, text=black, minimum size=6pt,inner sep=4pt,outer sep=2pt}]
            \node[draw=white, fill=white,text=black](n) at (-65pt,40pt) {$G(f):$};
            \node (0) at (35pt,-35pt) {$3$};
            \node (1) at (0pt,25pt) {$1$};
            \node (2) at (-35pt,-35pt) {$2$};
            \path[ultra thick, -to]
            (1) edge[bend left=15,black] (0)
            (0) edge[bend left=15,black] (2)
            (2) edge[bend left=15,black] (1)
            (0) edge[bend left=15,black] (1)
            ;
        \end{tikzpicture}
    \end{minipage}
        \centering
        \begin{tikzpicture}[scale=0.85,every node/.style={circle, draw=black!60, fill=white, text=black, minimum size=6pt,inner sep=2.5pt,outer sep=2pt}]
            \node[draw=white, fill=white,text=black](n) at (-100pt,60pt) {$\Gamma(f):$};
            \node (1) at (130pt,0pt) {$000$};
            \node (2) at (-30pt,45pt) {$101$};
            \node (3) at (-65pt,0pt) {$111$};
            \node (4) at (-45pt,-45pt) {$011$};
            \node (5) at (0pt,-60pt) {$010$};
            \node (6) at (45pt,-45pt) {$100$};
            \node (7) at (65pt,0pt) {$001$};
            \node (8) at (30pt,45pt) {$110$};
            \path[very thick, -to]
            (1.-10.00) edge[controls=+(-30.00:30pt)and +(40.00:30pt),black] (1.10.00)
            (2) edge[bend right=15,black] (3)
            (3) edge[bend right=15,black] (4)
            (4) edge[bend right=15,black] (5)
            (5) edge[bend right=15,black] (6)
            (6) edge[bend right=15,black] (7)
            (7) edge[bend right=15,black] (8)
            (8) edge[bend right=15,black] (2)
            ;
        \end{tikzpicture}
        \caption{Quasi-Hamiltonian dynamics.}
        \label{fig:ej0002}
\end{figure}

\begin{remark}
    By \Cref{prop799}, for any Hamiltonian or quasi-Hamiltonian BN \(f\) with \(n \geq 2\) and proper subset \(I \subsetneq [n]\) inducing a subnetwork, \(f_I\) has a Hamiltonian cycle.
\end{remark}

\Cref{prop799} strengthens the connection between the dynamics digraph and the resulting interaction graph, highlighting how the global properties of a network influence the characteristics of its induced subnetworks. Next, we explore how these properties affect the connectivity of the interaction graph.

\begin{proposition} \label{conexidad}
    If \(G_{\Gamma}\) is Hamiltonian, then \(G(f)\) is connected for any \(f \in \mathcal{F}(G_{\Gamma})\).
\end{proposition}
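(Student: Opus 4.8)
The plan is to treat the three subclasses of Hamiltonian dynamics from \Cref{defi0010} according to their attractor, reading ``connected'' in the weak sense (it is enough to show the underlying undirected graph of $G(f)$ is connected). Suppose first that $f$ has maximum or intermediate height. In either case the trajectory visiting every configuration produces exactly one vertex of in-degree zero in $\Gamma(f)$ --- the start of the transient path, which ends at the fixed point (maximum height) or flows into the limit cycle (intermediate height) --- so $f$ has a unique Garden of Eden. By \Cref{prop:prop507} some component $j\in[n]$ satisfies $d^-(j)=n$ in $G(f)$, i.e. $f_j$ depends on every variable; hence every vertex $i$ is joined to $j$ by an arc and $G(f)$ is connected, exactly as recorded in \Cref{rmk:garden}.

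The main case is the Hamiltonian cycle, where no Garden of Eden exists and \Cref{prop:prop507} does not apply (the counterexample following it shows a vertex of in-degree $n$ need not occur). I argue by contradiction: assuming $G(f)$ is disconnected, let $I\subsetneq[n]$ be the vertex set of one weakly connected component and $J=[n]\setminus I$. Since no arc of $G(f)$ joins $I$ and $J$, each $f_i$ with $i\in I$ depends only on coordinates in $I$ and each $f_j$ with $j\in J$ only on coordinates in $J$; in particular each of $I$ and $J$ receives no arc from outside, so both induce well-defined subnetworks $f_I$ and $f_J$. Under the identification $\qn\cong\{0,1\}^{|I|}\times\{0,1\}^{|J|}$ the network then factors as $f(x)=(f_I(x_I),f_J(x_J))$, so that $\Gamma(f)$ is the product of $\Gamma(f_I)$ and $\Gamma(f_J)$.

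Because the Hamiltonian cycle is a trajectory of length $2^n>2^n-2^{n-|I|}$, \Cref{prop799} forces $f_I$ to be a single cyclic permutation of $\{0,1\}^{|I|}$ of length $2^{|I|}$, and symmetrically $f_J$ a single cycle of length $2^{|J|}$. It then remains to use the fact that the product of a $p$-cycle and a $q$-cycle acting on disjoint coordinate blocks splits into $\gcd(p,q)$ cycles, each of length $\operatorname{lcm}(p,q)$. With $p=2^{|I|}$, $q=2^{|J|}$ and $|I|,|J|\ge1$ we get $\gcd(p,q)=2^{\min(|I|,|J|)}\ge2$, so $\Gamma(f)$ would contain at least two disjoint cycles, contradicting that it is a single cycle of length $2^n$. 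Hence $G(f)$ is connected.

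I expect the Hamiltonian cycle case to be the crux: one must justify carefully that weak disconnectedness really yields the product decomposition $f=(f_I,f_J)$ with both factors eligible for \Cref{prop799}, and then invoke the cycle-structure computation for products of cyclic permutations. Equivalently, one may finish by observing that every state would then have period dividing $\operatorname{lcm}(2^{|I|},2^{|J|})=2^{\max(|I|,|J|)}<2^n$, which already precludes a single $2^n$-cycle. The maximum and intermediate height cases, by contrast, follow immediately from \Cref{prop:prop507} and \Cref{rmk:garden}.
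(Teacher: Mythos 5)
Your proposal is correct and follows essentially the same route as the paper: maximum and intermediate height are dispatched via \Cref{prop:prop507}, and the Hamiltonian cycle case is handled by contradiction, applying \Cref{prop799} to each block of a disconnection and then deriving that every configuration has period at most $\operatorname{lcm}(2^{|I|},2^{|J|})<2^n$. Your gcd cycle-counting is just a cosmetic variant of the paper's lcm-of-periods argument, which you also state as the ``equivalent'' finish.
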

\begin{proof}
    If \(G_{\Gamma}\) is Hamiltonian of maximum or intermediate height, the result of \Cref{prop:prop507} is obtained.
    
    Suppose \(G_{\Gamma}\) is a Hamiltonian cycle, and consider \(f \in \mathcal{F}(G_{\Gamma})\). 
    By contradiction, assume \(G(f)\) is not connected. This implies that \(G(f)\) has \(k \geq 2\) connected components, denoted as \(G(f)[S_1], G(f)[S_2], \ldots, G(f)[S_k]\), with \(S_i \subsetneq [n]\) for \(i \in \{1, 2, \ldots, k\}\). Since \(G(f)[S_i]\) has no incoming or outgoing edges to other components, by \Cref{prop799}, each subnetwork induced by \(S_i\) is a Hamiltonian cycle.
    
    Let \(d \in \mathbb{N}\) denote the least common multiple described in (\ref{eq00020}). Since \(f_{S_i}\) is a Hamiltonian cycle, its period \(p(f_{S_i})\) is a power of two, and the least common multiple of these periods corresponds to the largest of these powers.
    \begin{align} \label{eq00020}
        d &= \operatorname{lcm} \{p(f_{S_i}) : i \in \{1, 2, \ldots, k\} \} \\
        &= \operatorname{max} \{2^{|S_i|} : i \in \{1, 2, \ldots, k\} \}.
    \end{align}
    Since \(|S_i| < n\), it follows that \(d < 2^n\). For an arbitrary configuration \(x \in \{0,1\}^n\), observe that 
    \((f_{S_i})^d(x_{S_i}) = x_{S_i}\) for all \(i \in \{1, 2, \ldots, k\}\). This implies that the period \(p(f) \leq d < 2^n\), contradicting the assumption that \(f\) is a Hamiltonian cycle. Thus, for any \(f \in \mathcal{F}(G_{\Gamma})\), the digraph \(G(f)\) must be connected.
\end{proof}

\Cref{eje0005} and \ref{eje0006} present Hamiltonian cycle and intermediate height BNs with interaction graphs that are not strongly connected. 
In the case that \(G(f)\) is not strongly connected, the graph is shown to be unilaterally connected.

\begin{definition}
    A digraph is \textbf{unilaterally connected} if, for every pair of vertices $\{x, y\}$, there exists at least one path from $x$ to $y$ or from $y$ to $x$.
\end{definition}

It is straightforward to observe that a digraph is unilaterally connected if and only if it has no independent components. This fact is fundamental to the proof of \Cref{prop717}, which establishes the connection between Hamiltonian dynamics and the structure of the interaction graph.

\begin{theorem} \label{prop717}
    Let \(f\) be a Boolean network with Hamiltonian or quasi-Hamiltonian dynamics. The interaction graph of \(f\) is unilaterally connected.
\end{theorem}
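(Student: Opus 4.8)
The plan is to argue by contradiction, using the observation recorded just before the statement that a digraph is unilaterally connected if and only if it has no two independent components. So I would assume that \(G(f)\) contains two independent strongly connected components \(C_1\) and \(C_2\), i.e.\ there is no directed path between them in either direction, and aim for a contradiction with the fact that every \emph{proper} subnetwork of a Hamiltonian or quasi-Hamiltonian \(f\) is a Hamiltonian cycle (the remark following \Cref{prop799}).

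First I would localize the obstruction in suitable predecessor-closed vertex sets. Let \(I_1\) (resp.\ \(I_2\)) be the set of all vertices of \([n]\) from which \(C_1\) (resp.\ \(C_2\)) is reachable; each \(I_\ell\) contains its component and, receiving no arc from its complement, is closed under predecessors and hence induces a well-defined subnetwork \(f_{I_\ell}\) to which \Cref{prop799} applies. Independence forces \(C_2 \not\subseteq I_1\) and \(C_1 \not\subseteq I_2\), so \(I_1,I_2\) are nonempty proper subsets of \([n]\), neither contains the other, and consequently \(|I_1 \cup I_2| > \max\{|I_1|,|I_2|\}\). By the remark to \Cref{prop799}, each \(f_{I_\ell}\) is a Hamiltonian cycle, so its dynamics is a single cycle and \(p(f_{I_\ell}) = 2^{|I_\ell|}\).

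Next I would set \(J = I_1 \cup I_2\), again predecessor-closed, and bookkeep periods. Since \(I_1\) and \(I_2\) are predecessor-closed inside \(J\), the projections of the dynamics of \(f_J\) onto the coordinates \(I_1\) and \(I_2\) evolve autonomously as \(f_{I_1}\) and \(f_{I_2}\); because \(I_1 \cup I_2 = J\) these projections determine the whole state, so a configuration of \(f_J\) returns to itself after \(t\) steps exactly when \(2^{|I_1|}\mid t\) and \(2^{|I_2|}\mid t\), that is, when \(2^{\max\{|I_1|,|I_2|\}}\mid t\). Hence every state of \(f_J\) has period exactly \(2^{\max\{|I_1|,|I_2|\}}\). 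The contradiction then comes from comparing this forced period with \(|J|\). If \(J \subsetneq [n]\), then \(f_J\) is itself a Hamiltonian cycle, so every state has period \(2^{|J|} > 2^{\max\{|I_1|,|I_2|\}}\), which is impossible. If \(J = [n]\), then \(f_J = f\) and the uniform period \(2^{\max\{|I_1|,|I_2|\}} < 2^n\) is incompatible with the global type of \(f\): a maximum or intermediate height Hamiltonian network has transient states and so cannot have all states periodic; a quasi-Hamiltonian network is a permutation with a fixed point (period \(1\neq 2^{\max\{|I_1|,|I_2|\}}\)) together with a cycle of odd length \(2^n-1\); and a Hamiltonian cycle is a single \(2^n\)-cycle with period \(2^n > 2^{\max\{|I_1|,|I_2|\}}\).

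I expect the main obstacle to be the middle step: justifying that \(I_1,I_2\) are legitimate subnetwork-inducing sets (the predecessor-closed condition, slightly more general than the source-component phrasing of the definition) and that the period of \(f_J\) is exactly the larger of the two powers of two. The crucial leverage is that every proper subnetwork is a Hamiltonian cycle, so all relevant periods are powers of two and their least common multiple is simply the maximum; the branching of the component graph inflates \(|J|\) strictly beyond this maximum, and that quantitative gap is precisely what yields the contradiction.
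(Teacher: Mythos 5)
Your proof is correct and follows essentially the same route as the paper: it surrounds the two independent components by predecessor-closed proper subsets, invokes \Cref{prop799} so that each induced subnetwork is a Hamiltonian cycle with period a power of two, and derives the contradiction from the fact that the union of the two subsets returns to itself after only the larger of those periods, which is strictly smaller than what its size would require. If anything, your explicit case analysis when \(I_1 \cup I_2 = [n]\) is slightly more careful than the paper's argument, which tacitly applies \Cref{prop799} to \(M \cup A \cup B\) even when that set is all of \([n]\).
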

\begin{proof}
    Assume, by contradiction, that \(f\) has an interaction graph that is not unilaterally connected. Then, there exist two vertices in \(G(f)\) with no paths between them, implying they belong to independent components induced by \(A, B \subsetneq [n]\).

    Let \(M \subsetneq [n]\) be a set inducing a subnetwork of \(f\) that has edges directed to vertices in \(A\) and \(B\) in \(G(f)\) but not the other way around. This implies that \(M \cup A\), \(M \cup B\), and \(M \cup A \cup B =: D\) induce Hamiltonian cycle subnetworks due to \Cref{prop799}.
    
    Denote \(g = f_D\) and let \(s = \max\{p(f_{M \cup A}), p(f_{M \cup B})\} < 2^{|D|}\). Since \(p(f_{M \cup A})\) and \(p(f_{M \cup B})\) are multiples of a power of two, for any configuration \(x \in \{0,1\}^{|D|}\), it holds:
    \begin{align} \label{eq0020}
        g^s(x) &= (g^s(x)_A, (g_{M \cup B})^s(x_{M \cup B})) = (g^s(x)_A, x_{M \cup B}) = (g^s(x)_A, x_M, x_B), \\ 
        g^s(x) &= ((g_{M \cup A})^s(x_{M \cup A}), g^s(x)_B) = (x_{M \cup A}, g^s(x)_B) = (x_A, x_M, g^s(x)_B). \label{eq0021}
    \end{align}
    From (\ref{eq0020}) and (\ref{eq0021}), we obtain \(g^s(x) = x\), implying that the period of \(g\) is less than \(2^{|D|}\). This contradicts the assumption that \(g\) is a Hamiltonian cycle and proving that \(G(f)\) is unilaterally connected.
\end{proof}

\Cref{prop717} implies several relevant results. In particular, it establishes that the interaction graph has at most one source component and at most one sink component. The above, guarantees that any pair of components is connected by a path, and ensures that the component graph of \(G(f)\) contains a Hamiltonian path. 

It is conjectured that if \(f\) has a Hamiltonian cycle or an intermediate height dynamics, the component graph of \(G(f)\) is a transitive acylic digraph.


For maximum height, intermediate height, and quasi-Hamiltonian dynamics, the lenght of the attractor plays a crucial role in variable dependence, leading to greater interaction graph connectivity as shown in the following theorem.

\begin{theorem} \label{prop708}
    Let \(f\) be a quasi-Hamiltonian, Hamiltonian of maximum height, or intermediate height with \(p(f)\) being odd. Then \(G(f)\) is strongly connected.
\end{theorem}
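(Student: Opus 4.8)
The plan is to argue by contradiction, combining the autonomy of a source component with a parity obstruction on the period \(p(f)\). The starting observation is that in each of the three cases the period is odd: a maximum height network has \(p(f)=1\), a quasi-Hamiltonian network has \(p(f)=2^n-1\), and an intermediate height network is assumed to satisfy \(p(f)\) odd by hypothesis. This oddness is the single property I will play against the even periods forced by Hamiltonian cycle subnetworks.

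Suppose then that \(G(f)\) is \emph{not} strongly connected. Its condensation (component graph) is a nontrivial acyclic digraph, so it has a source component induced by a nonempty proper subset \(I\subsetneq[n]\). Having no incoming arcs from other components, \(I\) induces a subnetwork \(f_I\), and since \(f\) is Hamiltonian or quasi-Hamiltonian with \(I\subsetneq[n]\), the Remark following \Cref{prop799} gives that \(f_I\) is a Hamiltonian cycle. In particular the unique attractor of \(f_I\) is a cycle of length \(2^{|I|}\) visiting every configuration, so each \(x_I\in\{0,1\}^{|I|}\) is periodic of period exactly \(2^{|I|}\) under \(f_I\).

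Next I would use that the source component evolves independently of the remaining variables: because \(I\) induces a subnetwork, \(\bigl(f^t(x)\bigr)_I=(f_I)^t(x_I)\) for every \(x\) and every \(t\in\N\). Choosing a periodic point \(x\) of \(f\) (which exists since \(f\) has an attractor) and setting \(t=p(f)\) gives \(f^{p(f)}(x)=x\), hence \((f_I)^{p(f)}(x_I)=x_I\). As \(x_I\) has period \(2^{|I|}\) under \(f_I\), this forces \(2^{|I|}\mid p(f)\); but \(|I|\geq 1\) makes \(2^{|I|}\) even while \(p(f)\) is odd, a contradiction. Therefore \(G(f)\) must be strongly connected. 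The same contradiction can be phrased globally: every cycle of \(f\) projects onto the \(I\)-coordinates to the unique \(f_I\)-cycle, so its length is divisible by \(2^{|I|}\), whence \(2^{|I|}\mid p(f)\).

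The conceptual content is light and the expected main obstacle is only bookkeeping: one must be careful that a source component of a non-strongly-connected digraph is a \emph{proper} nonempty subset, so that \Cref{prop799} applies, and that the identity \(\bigl(f^t(x)\bigr)_I=(f_I)^t(x_I)\) is legitimate, which is exactly the defining feature of the subnetwork induced by a set with no incoming arcs. Once these are in place, the parity clash between the even period \(2^{|I|}\) of a Hamiltonian cycle and the odd \(p(f)\) closes the argument immediately.
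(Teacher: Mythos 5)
Your proposal is correct and follows essentially the same route as the paper: source component $\Rightarrow$ $f_I$ is a Hamiltonian cycle by \Cref{prop799} $\Rightarrow$ projecting a periodic point onto $I$ forces $2^{|I|}\mid p(f)$, contradicting oddness. The only cosmetic difference is that you fold the maximum-height case into the parity argument via $p(f)=1$, whereas the paper dispatches it separately by noting that a fixed point of $f$ cannot project onto a cycle of length $2^{|I|}$.
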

\begin{proof}
    Let \(f\) be a Hamiltonian Boolean network of maximum height. By \Cref{prop:prop507}, the digraph \(G(f)\) is connected. By contradiction, assume \(G(f)\) is not strongly connected. Denote \(G(f)[I]\) as a source component of \(G(f)\) induced by \(I \subsetneq V(G(f))\). By \Cref{prop799}, \(f_I\) is a Hamiltonian cycle with period \(p(f_I) = 2^{|I|}\), which contradicts the existence of a fixed point in the dynamics of $f$, thus proving that \(G(f)\) is strongly connected.
    
    Next, consider \(f\) as a quasi-Hamiltonian or intermediate height network with an odd period. By contradiction, assume \(G(f)\) is not strongly connected, and let \(I \subsetneq [n]\) be the set inducing a source component in \(G(f)\). Then, \(f_I\) is a Hamiltonian cycle by \Cref{prop799}.
    
    Let \(y \in \{0,1\}^n\) be an arbitrary configuration. For all \(a \in \mathbb{N}\), it holds that \((f_I)^{a \cdot p(f_I)}(y_I) = y_I\).
    For any periodic configuration \(x \in \{0,1\}^n\) of \(f\), it follows that \(x = f^{p(f)}(x)\). 
    Projecting onto the components in \(I\) gives (\ref{eq:00040}).
    \begin{equation} \label{eq:00040}
        x_I = f^{p(f)}(x)_I = (f_I)^{p(f)}(x_I).
    \end{equation}
    The equality implies that \(p(f)\) is a multiple of \(p(f_I)\). However, since \(p(f_I) = 2^{|I|}\), \(p(f)\) must be even. This contradicts the assumption that \(p(f)\) is odd, proving that \(G(f)\) is strongly connected.
\end{proof}

The Hamiltonian BNs satisfy \(p(f) + h(f) = 2^n\), which allows \Cref{prop708} to be reformulated in terms of height.

\section{Hamiltonian unate Boolean networks} \label{chapter41}

Since not all families of BNs can exhibit Hamiltonian behaviors, this section focuses on the construction of Hamiltonian-type unate Boolean networks.

From the literature, it is known that monotone networks without constant local activation functions have at least two fixed points: \(\vec{0}\) and \(\vec{1}\). Additionally, the length of their limit cycles \(|C| \in \mathbb{N}\) cannot reach \(2^n\), as it is upper-bounded by the number of incomparable vectors with \(n\) components \cite{ARACENA2004237,Sperner}. This bound is presented in (\ref{cotateo}):
\begin{align} \label{cotateo}
    |C| \leq \binom{n}{ \lfloor \frac{n}{2} \rfloor } < 2^n.
\end{align}

On the other hand, conjunctive and disjunctive networks have a height upper-bounded by \(h(f) \leq 2n^2 - 3n + 2\) \cite{yuntive}. If the interaction graph is strongly connected, these networks also have at least two fixed points. 
Additionally, Hamiltonian cycle BNs are balanced in each of their local activation functions, a property not satisfied by certain types of networks, such as conjunctive, disjunctive, or canalizing networks with more than one variable.

In contrast, \cite{SATO1982201} states that Hamiltonian cycle BNs can be modeled using neural networks, a subfamily of unate Boolean networks, and demonstrates that self-duality is a necessary condition to describe bijective behaviors in neural networks. 

Along these lines, we will prove the existence of unate BNs that describe Hamiltonian dynamics of maximum height, intermediate height, and Hamiltonian cycle.

\subsection{Self-dual Boolean networks}

In this section, we explore self-duality, formally described in \Cref{defin0009}. It is conjectured that self-duality constitutes a necessary condition for a Hamiltonian cycle BN to be unate, derived from the properties it induces and empirical results in the literature \cite{Symmetry}.

\begin{definition}\label{defin0009}
    A Boolean network \(f\) with \(n \in \mathbb{N}\) variables is said to be \textbf{self-dual in} \(I \subseteq [n]\), with \(I \neq \emptyset\), if for any configuration \(x \in \{0,1\}^n\), it holds that \(f(x) = \overline{f(\overline{x}^{I})}^{I}\).In the case that $I=[n]$ we say that f is simply {\bf self-dual}.
\end{definition}

\begin{lemma} \label{prop711}
    Let \(I \subseteq [n]\) be non-empty. A Boolean network \(f\) is self-dual in \(I\) if and only if, for any \(x \in \{0,1\}^n\) and \(k \in \mathbb{N}\), it holds that:
    \[
    f^k(x) = \overline{f^k(\overline{x}^I)}^I.
    \]
\end{lemma}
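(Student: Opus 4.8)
The plan is to prove the two implications separately. The reverse implication is immediate: if the identity $f^k(x) = \overline{f^k(\overline{x}^I)}^I$ holds for every $x$ and every $k$, then specializing to $k=1$ recovers exactly the defining equation of self-duality in $I$ from \Cref{defin0009}. So all the content lies in the forward direction.

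For the forward implication I would argue by induction on $k$. The crucial preliminary observation is that the operation $y \mapsto \overline{y}^I$ is an \emph{involution}: flipping the components indexed by $I$ twice restores the original configuration, so $\overline{\overline{y}^I}^I = y$ for every $y \in \{0,1\}^n$. The base case $k=1$ is precisely the hypothesis that $f$ is self-dual in $I$.

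For the inductive step, assume $f^k(x) = \overline{f^k(\overline{x}^I)}^I$ for all $x$. Writing $f^{k+1}(x) = f(f^k(x))$ and substituting the induction hypothesis gives $f^{k+1}(x) = f\big(\overline{f^k(\overline{x}^I)}^I\big)$. Setting $y = f^k(\overline{x}^I)$, I would then apply the one-step self-duality identity to the argument $\overline{y}^I$, namely $f(\overline{y}^I) = \overline{f\big(\overline{(\overline{y}^I)}^I\big)}^I$, and collapse the inner double negation via the involution to obtain $f(\overline{y}^I) = \overline{f(y)}^I$. Re-expanding $y$ then yields $f^{k+1}(x) = \overline{f(f^k(\overline{x}^I))}^I = \overline{f^{k+1}(\overline{x}^I)}^I$, which completes the induction.

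There is no genuine obstacle here; the only points requiring care are tracking the direction of composition—peeling off the outermost $f$ and invoking the one-step identity on the correct, \emph{doubly negated}, argument—and applying the involution at exactly the right moment to cancel $\overline{(\overline{y}^I)}^I$. (If $\mathbb{N}$ is taken to include $0$, the case $k=0$ holds as well, since $f^0(x)=x$ and $\overline{\overline{x}^I}^I = x$ by the same involution.)
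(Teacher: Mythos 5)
Your proposal is correct and follows essentially the same route as the paper: induction on \(k\) with the base case being the definition of self-duality, and the inductive step peeling off the outermost \(f\) and applying the one-step identity \(f(\overline{y}^I)=\overline{f(y)}^I\). You are in fact slightly more explicit than the paper in justifying that rewriting via the involution \(\overline{\overline{y}^I}^I=y\), which the paper uses silently.
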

\begin{proof}
    Let \(f\) be self-dual in \(I\). As shown in (\ref{eq:0011}), applying the definition of self-duality twice proves the case \(k=2\).
    \begin{align} \label{eq:0011}
        f^2(x) = f(f(x)) 
        = f(\overline{f(\overline{x}^I)}^I) 
        = \overline{f(f(\overline{x}^I))}^I 
        = \overline{f^2(\overline{x}^I)}^I.
    \end{align}
    Proceed by induction on \(k \in \mathbb{N}\). Suppose that for all \(x \in \{0,1\}^n\), \(f^k(x) = \overline{f^k(\overline{x}^I)}^I\). We prove the case for \(k+1\):
    \begin{align} \label{eq:0012}
        f^{k+1}(x) = f(f^k(x)) 
        = f(\overline{f^k(\overline{x}^I)}^I) 
        = \overline{f(f^k(\overline{x}^I))}^I 
        = \overline{f^{k+1}(\overline{x}^I)}^I.
    \end{align}
    For the reverse implication, consider \(k=1\), which corresponds to the definition of self-duality. Thus, the property holds for all \(k \in \mathbb{N}\), completing the proof.
\end{proof}

A notable result is the relationship between self-duality and the length of the shortest path between configurations in a Hamiltonian cycle dynamic. This is formally established in the following lemma.

\begin{lemma} \label{prop713}
    Let \(f\) be a Hamiltonian cycle Boolean network with \(n \in \mathbb{N}\) variables and \(I \subseteq [n]\), with \(I \neq \emptyset\). A network \(f\) is self-dual in \(I\) if and only if, for any configuration \(x \in \{0,1\}^n\), it holds that:
    \[
    f^{2^{n-1}}(x) = \overline{x}^I.
    \]
\end{lemma}
\begin{proof}
    Suppose \(f\) is self-dual in \(I\). By contradiction, assume there exists a configuration \(x\) such that \(f^{2^{n-1}}(x) \neq \overline{x}^I\). Because the dynamics is a single cycle, without loss of generality we can assume that \(k < 2^{n-1}\) be the smallest value such that \(f^k(x) = \overline{x}^I\). By \Cref{prop711}, we deduce:
    \begin{align} \label{eq000112}
        f^k(\overline{x}^I) = \overline{f^k(x)}^I = x.
    \end{align}
    This implies that the length of the shortest path between \(\overline{x}^I\) and \(x\) in the dynamics is \(k\). However, this contradicts the fact that \(f\) has a cycle of length \(2^n\). Therefore, for any \(x \in \{0,1\}^n\), it holds that \(f^{2^{n-1}}(x) = \overline{x}^I\).

    Conversely, if every configuration $x \in \{0,1\}^n$ satisfies $f^{2^{n-1}}(x) = \overline{x}^I$, the bijectivity of the network ensures that for any $x = f(y)$, we have:
    \begin{align*}
        f^{2^{n-1}}(x) &= f^{2^{n-1}}(f(y)) = f(f^{2^{n-1}}(y)) = f(\overline{y}^I), \\
        \overline{x}^I &= \overline{f(y)}^I.
    \end{align*}
    It follows that $f(\overline{y}^I) = \overline{f(y)}^I$, which is equivalent to the definition of self-duality.
        
\end{proof}

A relevant result for our study is associated with local dependency. In the case where a configuration \(x \in \{0,1\}^n\) induces an edge in \(G_x(f)\), we can prove that such dependency is also induced by \(\overline{x}\).

\begin{lemma} \label{lema502}
    Any self-dual Boolean network \(f\) in \([n]\) satisfies, for every \(x \in \{0,1\}^n\), the equality:
    \begin{equation*}
        (G_x(f), \sigma_{f}) = (G_{\overline{x}}(f), \sigma_{f}).
    \end{equation*}
\end{lemma}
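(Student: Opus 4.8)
The plan is to establish the identity of signed local interaction graphs in two stages: first matching the arc sets of $G_x(f)$ and $G_{\overline{x}}(f)$, and then matching the sign attached to each common arc. Both stages rest on one elementary observation combined with self-duality. The observation is the commutation identity $\overline{x} \oplus e_i = \overline{x \oplus e_i}$, valid for every $x \in \{0,1\}^n$ and $i \in [n]$, since negating all coordinates and then flipping coordinate $i$ produces the same configuration as flipping coordinate $i$ and then negating all coordinates. Self-duality in $[n]$, read componentwise, states that $f_j(\overline{y}) = 1 - f_j(y)$ for every $y \in \{0,1\}^n$ and $j \in [n]$.

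For the arc sets, I would fix a pair $(i,j)$ and rewrite the defining condition of $G_{\overline{x}}(f)$ using the two facts above: from $f_j(\overline{x}) = 1 - f_j(x)$ and $f_j(\overline{x} \oplus e_i) = f_j(\overline{x \oplus e_i}) = 1 - f_j(x \oplus e_i)$, the inequality $f_j(\overline{x}) \neq f_j(\overline{x} \oplus e_i)$ becomes $1 - f_j(x) \neq 1 - f_j(x \oplus e_i)$, which is equivalent to $f_j(x) \neq f_j(x \oplus e_i)$, i.e. to $(i,j)$ being an arc of $G_x(f)$. Hence the two local interaction graphs have exactly the same arc set.

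For the signs, I would encode the local sign on an arc $(i,j) \in G_z(f)$ as the product $\sigma_{f}(i,j)=\bigl(f_j(z\oplus e_i)-f_j(z)\bigr)\bigl(1-2z_i\bigr)$, a quantity in $\{+1,-1\}$ that equals $+1$ exactly when raising coordinate $i$ from $0$ to $1$ raises $f_j$, matching the increasing/decreasing convention of the sign labelling. The key point is that replacing $z$ by $\overline{z}$ flips each of the two factors: the coordinate factor satisfies $1-2(\overline{z})_i = -(1-2z_i)$, while the value factor satisfies $f_j(\overline{z}\oplus e_i)-f_j(\overline{z}) = f_j(\overline{z\oplus e_i})-f_j(\overline{z}) = -\bigl(f_j(z\oplus e_i)-f_j(z)\bigr)$ by the commutation identity and componentwise self-duality. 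The two reversals cancel, so the local sign on every common arc is preserved, which gives $(G_x(f),\sigma_f)=(G_{\overline{x}}(f),\sigma_f)$.

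The only delicate point is this sign bookkeeping: passing from $x$ to $\overline{x}$ simultaneously complements the value of $f_j$ and complements the coordinate $x_i$, and a careless reading could suggest the overall sign flips. Writing the sign as the explicit product above makes transparent that both effects occur and exactly cancel. If one prefers to avoid the product formula, the same conclusion follows from a short case distinction on whether $x_i=0$ or $x_i=1$, checking in each case that the endpoint of the edge playing the role of the coordinate-$0$ vertex carries the complementary $f_j$-value, so that the ordered pair of values defining the local sign is the same at $x$ and at $\overline{x}$.
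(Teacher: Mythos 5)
Your proposal is correct and follows essentially the same route as the paper: the arc-set equality via the componentwise self-duality identity $f_j(\overline{y})=1-f_j(y)$ together with $\overline{x}\oplus e_i=\overline{x\oplus e_i}$, and the sign preservation via two cancelling reversals (the complemented coordinate $x_i$ and the complemented value of $f_j$). Your explicit product formula for the local sign just makes rigorous the bookkeeping that the paper's proof states informally, so no further comparison is needed.
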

\begin{proof}
    Let \(i, j \in [n]\) be arbitrary, and suppose that the edge \((i, j)\) belongs to the digraph \(G_x(f)\). 
    By the self-duality of \(f\), the following chain of equivalences holds:
    \begin{align*}
        (i, j) \in A(G_x(f)) 
        &\Leftrightarrow (f_j(x) < f_j(x \oplus e_i)) \lor (f_j(x) > f_j(x \oplus e_i)) \\
        &\Leftrightarrow (\overline{f_j(\overline{x})} < \overline{f_j(\overline{x \oplus e_i})}) 
        \lor (\overline{f_j(\overline{x})} > \overline{f_j(\overline{x \oplus e_i})}) \\
        &\Leftrightarrow (f_j(\overline{x}) > f_j(\overline{x} \oplus e_i)) 
        \lor (f_j(\overline{x}) < f_j(\overline{x} \oplus e_i)) \\
        &\Leftrightarrow (i, j) \in A(G_{\overline{x}}(f)).
    \end{align*}
    Therefore, both local interaction graphs are equal; that is, \(G_x(f) = G_{\overline{x}}(f)\).

    If \(\sigma_x(i,j) \in \{+1, -1\}\), this label implies a value for \(x_i\) in the chain of equivalences, establishing an increasing or decreasing behavior of \(f_j\) with respect to index \(i\). Evaluating at \(\overline{x}\), the value in component \(j\) is inverted, and the inequalities switch from strictly greater to strictly lesser, and vice versa. Hence, \(\sigma_x(i,j) = \sigma_{\overline{x}}(i,j)\).
    Consequently, we prove that \((G_x(f), \sigma_{f}) = (G_{\overline{x}}(f), \sigma_{f})\), completing the proof.
\end{proof}

Next, we prove that the self-duality for Hamiltonian cycle BNs ensures maximum in-degree for such components.

\begin{definition} \label{defin0008}
    Let \(f\) be a Boolean network with \(n\) variables \(i, j \in [n]\), \(a \in \{0,1\}\). We denote \(T(f_j, x_i = a) = \{x \in T(f_j) : x_i = a\}\) as a subset of the set of true points of \(f_j\) consisting of configurations such that \(x_i\) takes the value \(a\). Similarly, \(F(f_j, x_i = a) = \{x \in F(f_j) : x_i = a\}\) is defined for the set of false points.
\end{definition}

\begin{lemma} \label{prop453}
    Let \(f\) be a Boolean network with \(n \geq 3\) variables, \(i, j \in [n]\), and suppose that the set of true points \(T(f_j)\) is a multiple of four and non-empty. If there exists some \(a \in \{0,1\}\) such that \(|T(f_j,x_i = a)|\) is odd, then \(f_j\) depends on all variables.
\end{lemma}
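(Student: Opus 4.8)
The plan is to partition the true points of $f_j$ according to the value of the coordinate $x_i$ and to argue by parity. Since the sets $T(f_j, x_i = 0)$ and $T(f_j, x_i = 1)$ of \Cref{defin0008} partition $T(f_j)$, we have $|T(f_j)| = |T(f_j, x_i = 0)| + |T(f_j, x_i = 1)|$. As $|T(f_j)|$ is a multiple of four, hence even, and one summand $|T(f_j, x_i = a)|$ is odd by hypothesis, the other summand $|T(f_j, x_i = \overline{a})|$ must be odd as well. Thus both halves have odd cardinality, and it then suffices to show that $f_j$ depends on each variable $x_k$ for every $k \in [n]$; I would treat the cases $k \neq i$ and $k = i$ separately.

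First I would handle every index $k \neq i$. Suppose, for contradiction, that $f_j$ does not depend on $x_k$, i.e. $f_j(x) = f_j(x \oplus e_k)$ for all $x$. Because $k \neq i$, flipping the $k$-th bit leaves the $i$-th coordinate unchanged, so the map $x \mapsto x \oplus e_k$ restricts to a map of $T(f_j, x_i = a)$ into itself; it is a fixed-point-free involution, since $x \neq x \oplus e_k$, and hence it pairs off the elements of $T(f_j, x_i = a)$. This forces $|T(f_j, x_i = a)|$ to be even, contradicting the parity established above. Therefore $f_j$ depends on $x_k$ for every $k \neq i$. This step is essentially the pairing mechanism of \Cref{lemaimpar}, carried out inside the subcube $\{x : x_i = a\}$.

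The main obstacle, and the place where the multiple-of-four hypothesis does genuine work beyond mere oddness, is the case $k = i$. Suppose $f_j$ does not depend on $x_i$, so $f_j(x) = f_j(x \oplus e_i)$ for all $x$. Then $x \mapsto x \oplus e_i$ is a bijection from $T(f_j, x_i = a)$ onto $T(f_j, x_i = \overline{a})$, whence $|T(f_j, x_i = 0)| = |T(f_j, x_i = 1)|$ and $|T(f_j)| = 2\,|T(f_j, x_i = a)|$. Since $|T(f_j, x_i = a)|$ is odd, this yields $|T(f_j)| \equiv 2 \pmod 4$, contradicting the assumption that $|T(f_j)|$ is a multiple of four. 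Hence $f_j$ depends on $x_i$ as well. Combining both cases shows that $f_j$ depends on all $n$ variables, as claimed; the hypothesis $n \geq 3$ simply guarantees the regime in which both conditions ($|T(f_j)|$ divisible by four and a subcube of odd size) can simultaneously hold.
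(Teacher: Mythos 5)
Your proof is correct and follows essentially the same route as the paper's: the pairing involution $x \mapsto x \oplus e_k$ inside the subcube $\{x : x_i = a\}$ handles every $k \neq i$, and the multiple-of-four hypothesis rules out the two halves being equal, which handles $k = i$. The only cosmetic difference is that you phrase the $k = i$ step as a direct contradiction modulo $4$, whereas the paper first proves ``unequal halves imply dependence on $x_i$'' and then shows the halves must be both odd and unequal; these are contrapositives of the same argument.
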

\begin{proof}
    First, let us prove that if the cardinalities of \(T(f_j,x_i = 0)\) and \(T(f_j,x_i = 1)\) differ, this implies that \(i \in N^-(j)\). By contraposition, suppose that \(f_j\) does not depend on \(x_i\). In this case, for any configuration \(x \in T(f_j)\), it also holds that \(x \oplus e_i \in T(f_j)\). However, this implies that the cardinalities of \(T(f_j,x_i = 0)\) and \(T(f_j,x_i = 1)\) are equal.

    On the other hand, if there exists \(a \in \{0,1\}\) such that \(|T(f_j,x_i = a)|\) is odd, we can prove that \(N^-(j) \supseteq [n] \setminus \{i\}\). By contradiction, assume that there exists \(k \in [n] \setminus \{i\}\) such that \(k \notin N^-(j)\). This implies that for any configuration \(x \in T(f_j,x_i = a)\), it also holds that \(x \oplus e_k \in T(f_j,x_i = a)\), which contradicts the odd cardinality of \(|T(f_j,x_i = a)|\).

    Finally, since \(|T(f_j)|\) is a multiple of four, the cardinalities of \(T(f_j,x_i = 0)\) and \(T(f_j,x_i = 1)\) must both be odd and different, as their sum must be a multiple of four. If these cardinalities were equal, their sum would be a multiple of two but not of four. This proves that \(N^-(j) = [n]\), implying that \(f_j\) depends on all variables.
\end{proof}

\begin{proposition} \label{prop714}
    If a Hamiltonian cycle Boolean network \(f\) with \(n \in \mathbb{N} \setminus \{2\}\) variables is self-dual in \(I \subseteq [n]\), 
    then, for any \(j \in I\), it holds that \(d^-(j) = n\).
\end{proposition}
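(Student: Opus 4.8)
The plan is to reduce the statement to a parity fact about a single coordinate and then invoke \Cref{prop453}. Fix $j \in I$. The case $n = 1$ is immediate: there $\abs{T(f_j)} = 2^{0} = 1$ is odd, so \Cref{lemaimpar} already gives $d^-(j) = 1 = n$. For $n \geq 3$, since $f$ is a Hamiltonian cycle it is a bijection, so $f(x)$ runs over every state exactly once, and hence $\abs{T(f_j)} = 2^{n-1}$, a nonzero multiple of four. Consequently, by \Cref{prop453} applied with the choice $i = j$, it suffices to exhibit a value $a \in \{0,1\}$ for which $\abs{T(f_j, x_j = a)}$ is odd; I will establish this for $a = 0$.

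The key idea is to read $\abs{T(f_j, x_j = 0)}$ off the cyclic order of the states. Write the unique cycle as $x^{(0)}, x^{(1)}, \dots, x^{(2^n - 1)}$ with indices taken modulo $2^n$ and $x^{(t+1)} = f(x^{(t)})$, and set $s_t = (x^{(t)})_j \in \{0,1\}$. A state $x^{(t)}$ lies in $T(f_j, x_j = 0)$ precisely when $s_t = 0$ and $(x^{(t+1)})_j = f_j(x^{(t)}) = 1$; thus $\abs{T(f_j, x_j = 0)}$ counts the \emph{rising edges} of the cyclic binary word $(s_t)$, i.e. the positions $t$ with $(s_t, s_{t+1}) = (0,1)$. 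Because $j \in I$ and $f$ is self-dual in $I$, \Cref{prop713} gives $f^{2^{n-1}}(x) = \overline{x}^I$, and projecting onto coordinate $j$ yields $s_{t + 2^{n-1}} = \overline{s_t}$ for every $t$. In other words, with $m := 2^{n-1}$, the word $(s_t)$ is anti-periodic with anti-period $m$.

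What remains, and what I expect to be the only genuine obstacle, is the parity claim: an anti-periodic cyclic binary word has an odd number of rising edges. The argument I have in mind splits the $2m$ edges into the first block $\{0,\dots,m-1\}$ and the second block $\{m,\dots,2m-1\}$. Since $(s_{t+m},s_{t+m+1}) = (\overline{s_t},\overline{s_{t+1}})$, a rising edge in the second block sits exactly at a position $t+m$ where the first-block edge at $t$ is falling; this puts the rising edges of the second block in bijection with the falling edges of the first block. Hence the total number of rising edges equals (rising edges of the first block) plus (falling edges of the first block), which is the number of value changes along the path $s_0, s_1, \dots, s_m$. Its endpoints satisfy $s_m = \overline{s_0} \neq s_0$, so this number of changes is odd. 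Therefore $\abs{T(f_j, x_j = 0)}$ is odd, and \Cref{prop453} (with $i = j$) forces $f_j$ to depend on every variable, giving $d^-(j) = n$.

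This is exactly where the hypothesis $n \neq 2$ is used: for $n = 2$ one has $\abs{T(f_j)} = 2$, which is not a multiple of four, so \Cref{prop453} is unavailable, and the parity fact no longer yields the conclusion. Indeed, the self-dual Hamiltonian cycle $00 \to 01 \to 11 \to 10 \to 00$ realizes $f_1(x) = x_2$ and $f_2(x) = \overline{x}_1$, both of in-degree one, so the statement genuinely fails there and the exclusion is necessary. Note also that the proof uses $j \in I$ only to obtain anti-periodicity of coordinate $j$; the indices outside $I$ play no role.
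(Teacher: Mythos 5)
Your proof is correct and follows essentially the same route as the paper's: both use \Cref{prop713} to split the Hamiltonian cycle into two antipodal halves, count the $0 \to 1$ transitions of coordinate $j$ along them to show that $\abs{T(f_j, x_j = 0)}$ is odd, and then invoke \Cref{prop453}. Your explicit treatment of $n=1$ and your verification that $\abs{T(f_j)} = 2^{n-1}$ is a nonzero multiple of four are details the paper leaves implicit.
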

\begin{proof}
    Let \(j \in I\) be arbitrary, and denote by \(P\) and \(Q\) the paths in \(\Gamma(f)\) from \(\vec{0} \in \{0,1\}^n\) to the configuration \(\vec{0} \oplus e_I \in \{0,1\}^n\) and vice versa, respectively.
    From \Cref{prop713}, we know that the shortest path length between \(\vec{0}\) and \(\vec{0} \oplus e_I\) is \(2^{n-1}\) implying that \(|P| = |Q| = 2^{n-1}\). Moreover, since the network is a Hamiltonian cycle, these paths cover all edges in the dynamics.

    The path \(P\) must contain at least one configuration \(z \in T(f_j,x_j = 0)\) to transition the value of \(j\) from \(0\) to \(1\). In particular, if \(P\) contains \(k \in \mathbb{N}\) configurations in \(T(f_j,x_j = 0)\), then \(P\) contains exactly \(k - 1\) configurations \(w \in F(f_j,x_j = 1)\), as these transitions are necessary for the change in the value of \(j\) along \(P\).

    If \(w \in F(f_j,x_j = 1)\) is arbitrary, the self-duality of \(f\) implies that \(f_j(w) \neq f_j(\overline{w}^I)\), which in turn establishes that \(w \in F(f_j,x_j = 1)\) if and only if \(\overline{w}^I \in T(f_j,x_j = 0)\). Consequently, \(w \in F(f_j,x_j = 1)\) belongs to the trajectory \(P\) if and only if \(\overline{w}^I \in T(f_j,x_j = 0)\) belongs to the trajectory \(Q\).

    Finally, the cardinality of \(T(f_j,x_j = 0)\) is obtained as the sum of the configurations \(z, \overline{w}^I \in T(f_j,x_j = 0)\), resulting in an odd quantity as described in (\ref{eq:000050}).
    \begin{equation} \label{eq:000050}
        |T(f_j,x_j = 0)| = k + (k - 1) = 2k - 1.
    \end{equation}
    Applying \Cref{prop453}, it is concluded that the index \(j \in I\) has in-degree \(n\) in \(G(f)\), completing the proof.
\end{proof}

\begin{remark}
    It is worth noting that if the Hamiltonian cycle Boolean network is self-dual in \([n]\), by the previous result, its interaction graph is \(K_n\).
\end{remark}

\subsection{Family of Hamiltonian cycle self-dual unate Boolean networks}

Initially, let us analyze the case \(n = 1, 2\). From \Cref{lem6_2}, given \(f\) as a Hamiltonian cycle BN described by a unate Boolean network, it has a unique interaction graph \(G(f)\) with edges labeled with different signs. \Cref{baseej6_1} presents an example of such an interaction graph with signs and details how \(f^{[2]}\) is constructed from the base case \(f^{[1]} = (\overline{x}_1)\), a network of a single variable that is unate and has a Hamiltonian cycle.

\begin{lemma} \label{lem6_2}
    If \(f\) is a Hamiltonian cycle, unate Boolean network with \(n = 2\) variables, then its interaction graph with signs \((G(f), \sigma_f)\) is a cycle without loops, with edges labeled as \(\sigma_f(i, j) = +1\) and \(\sigma_f(j, i) = -1\), where \(i \neq j \in [2]\).
\end{lemma}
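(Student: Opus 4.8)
The plan is to exploit the very small state space: with $n=2$ there are only four configurations, so $f$ ranges over permutations of $\{0,1\}^2$, and I can pin down each local function exactly. First I would use the fact, recorded just before the lemma, that a Hamiltonian cycle Boolean network is balanced in every coordinate; this also follows immediately from bijectivity, since $\abs{T(f_j)} = \abs{\{x : f(x)_j = 1\}} = \abs{\{y : y_j = 1\}} = 2^{n-1}$. For $n=2$ this forces each $f_j$ to have exactly two true points. Enumerating the six Boolean functions of two variables with exactly two true points, the only unate ones are the four literals $x_1,\ \overline{x}_1,\ x_2,\ \overline{x}_2$ (the remaining two are $x_1\oplus x_2$ and its complement, which are not unate). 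Hence, by unateness, each of $f_1,f_2$ depends on exactly one variable, so $G(f)$ has in-degree exactly one at each vertex.

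Next I would eliminate every shape of $G(f)$ except the claimed loopless cycle. If $f_1$ and $f_2$ depended on the same variable, then $f$ would ignore the other variable and could not be injective, contradicting that $f$ is a bijection. If $f_1$ depended on $x_1$ and $f_2$ on $x_2$ (so $G(f)$ is two loops), then $f(x_1,x_2)=(a(x_1),b(x_2))$ would be a coordinatewise product of one-variable maps, each of which is the identity or negation; such a map has period dividing $2$ and therefore cannot be a cycle of length $4$. The only surviving case is that $f_1$ depends on $x_2$ and $f_2$ on $x_1$, so $(G(f),\sigma_f)$ is precisely the cycle $1 \to 2 \to 1$ without loops, with $i \neq j \in [2]$.

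It remains to fix the signs. Write $f_1 = a(x_2)$ and $f_2 = b(x_1)$ with $a,b \in \{\mathrm{id},\lnot\}$, so that $\sigma_f(2,1)=+1 \iff a=\mathrm{id}$ and $\sigma_f(1,2)=+1 \iff b=\mathrm{id}$. A direct computation gives $f^2(x_1,x_2) = \bigl((a\circ b)(x_1),\,(b\circ a)(x_2)\bigr)$, and each coordinate map is the identity when $a=b$ and negation when $a \neq b$. Thus $f^2=\mathrm{id}$ exactly when $a=b$, in which case $f$ has order at most $2$ and is not a $4$-cycle; conversely $a \neq b$ gives $f^2 = \overline{x}$, an order-$2$ fixed-point-free involution, so $f$ has order $4$ and hence is a single Hamiltonian cycle on the four states. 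Since $a \neq b$ means exactly one of the two arcs is negated, the two signs are opposite, i.e.\ $\sigma_f(i,j)=+1$ and $\sigma_f(j,i)=-1$, as claimed.

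The step I expect to be the crux is the middle one: one must be careful that bijectivity genuinely excludes the two-loop (coordinatewise product) interaction graph, since that is the configuration most easily mistaken for a legitimate Hamiltonian cycle. The order-$4$ analysis of $f^2$ then does double duty, simultaneously discarding the degenerate sign choices and forcing the opposite-sign conclusion, so the whole argument reduces to the elementary permutation-order bookkeeping above.
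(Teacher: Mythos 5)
Your proof is correct, and it is genuinely more informative than the paper's, which disposes of this lemma with a one-line appeal to an exhaustive check of all unate Hamiltonian cycle networks on two variables. Where the paper brute-forces the finitely many cases, you organize the same finite search into three conceptual steps: balancedness (immediate from bijectivity) plus unateness pins each $f_j$ down to one of the four literals; injectivity and the order of the permutation $f$ kill the same-variable and two-loop interaction graphs; and the computation $f^2(x_1,x_2)=\bigl((a\circ b)(x_1),(b\circ a)(x_2)\bigr)$ shows that only the mixed sign choice $a\neq b$ yields an element of order $4$, which on a four-element set must be a single cycle. Each step checks out: the six balanced functions of two variables are the four literals and the two parity functions, and only the literals are unate; a map ignoring a coordinate cannot be injective; and a coordinatewise product of involutions squares to the identity. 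The payoff of your route is that the balancedness-plus-unateness reduction and the order argument are exactly the ingredients that one would hope to generalize, whereas the paper's exhaustive analysis gives no such foothold; the cost is nil, since your argument is no longer than a careful write-up of the case check would be.
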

\begin{proof}
    An exhaustive analysis is carried out considering all possible configurations of a unate Hamiltonian cycle Boolean network with \(n = 2\) variables.
\end{proof}


\begin{figure}[htb!]
    \begin{subfigure}{.45\textwidth}
    \centering
    \begin{tikzpicture}[scale=1.5,every node/.style={circle, draw=black!60, fill=white, text=black, minimum size=6pt,inner sep=4pt,outer sep=2pt}]
    \node[draw=white, fill=white,text=black](n) at (-35pt,25pt) {$(G(f^{[2]}),\sigma_{f^{[2]}}):$};
    \node (1) at (15pt,15pt) {$1$};
    \node (2) at (-10pt,-10pt) {$2$};
    \path[ultra thick, -to]
    (1) edge[bend right=15,green] (2)
    (2) edge[bend right=15,red] (1)
    ;
    \end{tikzpicture}
    \end{subfigure}
    \begin{subfigure}{.45\textwidth}
        \centering
    \begin{tikzpicture}[scale=1.5,every node/.style={circle, draw=black!60, fill=white, text=black, minimum size=6pt,inner sep=4pt,outer sep=2pt}]
    \node[draw=white, fill=white,text=black](n) at (-45pt,35pt) {$\Gamma(f^{[2]}):$};
    \node (1) at (-15pt,30pt) {$01$};
    \node (2) at (-15pt,0pt) {$00$};
    \node (3) at (15pt,0pt) {$10$};
    \node (4) at (15pt,30pt) {$11$};
    \path[very thick, -to]
    (1) edge[bend right=15,black] (2)
    (2) edge[bend right=15,black] (3)
    (3) edge[bend right=15,black] (4)
    (4) edge[bend right=15,black] (1)
    ;
    \end{tikzpicture}
    \end{subfigure}
    \caption{Interaction graph with signs for \(f^{[2]}(x_1,x_2)=(\overline{x}_2,x_1)\) and its dynamics.}
    \label{baseej6_1}
\end{figure}

To extend this construction to the case \(n+1\), we use \(f^{[n]}\) a Hamiltonian cycle BN with \(n\) variables. 

\begin{definition} \label{construccion}
    Let \(f^{[1]} = (\overline{x}_1)\) be a Boolean network with a single variable. Recursively, networks \(h^{[n+1]}, f^{[n+1]} : \{0,1\}^{n+1} \to \{0,1\}^{n+1}\) are constructed from \(f^{[n]}\), \(n \in \mathbb{N}\). Specifically, if \(z^{[n+1]} \in \{0,1\}^{n+1}\) such that \(f^{[n]}(z_{[n]}) = \vec{0}\) and \(z_{n+1}^{[n+1]} = 0\), then:
    \begin{align*}
        h^{[n+1]}(x) &= (f^{[n]}(x_{[n]}), x_{n+1}), \\
        f^{[n+1]}_i(x)&=(h^{[n+1]}_i(x)\land d_{\overline{z}}(x) ) \lor c_{z}(x)
    \end{align*}
    where $z=z^{[n+1]}$, $i\in [n+1]$ and the Boolean functions \(c_z, d_{\overline{z}} : \{0,1\}^{n+1} \to \{0,1\}\) are conjunctive and disjunctive clauses, defined as \(c_z(z) = 1\), \(d_{\overline{z}}(\overline{z}) = 0\), and take the opposite value otherwise.
\end{definition}

\Cref{fig:h3} shows \(f^{[2]}(x_1, x_2) = (\overline{x}_2, x_1)\) and the associated auxiliary network \(h^{[3]}(x_1, x_2, x_3) = (\overline{x}_2, x_1, x_3)\).

By modifying the auxiliary network, swapping the preimages of \(\vec{0} \in \{0,1\}^{n+1}\) and \(\vec{1} \in \{0,1\}^{n+1}\), a new BN \(f^{[n+1]}\) is constructed, which is a Hamiltonian cycle, unate, and self-dual in \([n+1]\). \Cref{fig:h3ed} illustrates \(f^{[3]}\), a self-dual BN in \([3]\) defined by swapping preimages in the auxiliary network \(h^{[3]}\).

\begin{figure}[htb!]
    \centering
    \begin{tikzpicture}[scale=1.6,every node/.style={circle, draw=black!60, fill=white, text=black, minimum size=6pt,inner sep=4pt,outer sep=2pt}]
    \node[draw=white, fill=white,text=black](n) at (-90pt,40pt) {$\Gamma(h^{[3]}):$};
    \node (1) at (-60pt,30pt) {$010$};
    \node (2) at (-60pt,0pt) {$000$};
    \node (3) at (-30pt,0pt) {$100$};
    \node (4) at (-30pt,30pt) {$110$}; %
    \node (5) at (30pt,30pt) {$011$};
    \node (6) at (30pt,0pt) {$001$};
    \node (7) at (60pt,0pt) {$101$};
    \node (8) at (60pt,30pt) {$111$};
    \path[very thick, -to]
    (1) edge[bend right=15,purple] (2)
    (2) edge[bend right=15,black] (3)
    (3) edge[bend right=15,black] (4)
    (4) edge[bend right=15,black] (1) %
    (5) edge[bend right=15,black] (6)
    (6) edge[bend right=15,black] (7)
    (7) edge[bend right=15,blue] (8)
    (8) edge[bend right=15,black] (5) 
    ;
    \end{tikzpicture}
    \caption{Dynamics of the auxiliary network \(h^{[3]}(x_1,x_2,x_3)=(\overline{x}_2,x_1,x_{3})\).}
    \label{fig:h3}
\end{figure}

\begin{figure}[htb!]
    \centering
    \begin{tikzpicture}[scale=1.5,every node/.style={circle, draw=black!60, fill=white, text=black, minimum size=6pt,inner sep=4pt,outer sep=2pt}]
    \node[draw=white, fill=white,text=black](n) at (-90pt,50pt) {$\Gamma(f^{[3]}):$};
    \node[draw=white, fill=white,text=black](n) at (0pt,50pt) {$c_z$};
    \node[draw=white, fill=white,text=black](n) at (0pt,-19pt) {$d_{\overline{z}}$};
    \node (1) at (-60pt,30pt) {$010$};
    \node (2) at (-60pt,0pt) {$000$};
    \node (3) at (-30pt,0pt) {$100$};
    \node (4) at (-30pt,30pt) {$110$}; %
    \node (5) at (30pt,30pt) {$011$};
    \node (6) at (30pt,0pt) {$001$};
    \node (7) at (60pt,0pt) {$101$};
    \node (8) at (60pt,30pt) {$111$};
    \path[very thick, -to]
    (2) edge[bend right=15,black] (3)
    (3) edge[bend right=15,black] (4)
    (4) edge[bend right=15,black] (1) %
    (5) edge[bend right=15,black] (6)
    (6) edge[bend right=15,black] (7)
    (8) edge[bend right=15,black] (5) 
    (1) edge[bend left=35,purple] (8)
    (7) edge[bend left=35,blue] (2)
    ;
    \end{tikzpicture}
    \caption{Dynamics of the network \(f^{[3]}\) constructed from the auxiliary network \(h\).}
    \label{fig:h3ed}
\end{figure}

\begin{lemma} \label{teocidu}
    For any \(n \in \mathbb{N}\), the Boolean network \(f^{[n]}\) is a Hamiltonian cycle and self-dual in \([n]\).
\end{lemma}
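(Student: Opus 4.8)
The plan is to argue by induction on $n$ using \Cref{construccion}, proving simultaneously that $f^{[n]}$ is a Hamiltonian cycle and that it is self-dual in $[n]$ (the two properties are intertwined, since the merging of cycles in the inductive step relies on self-duality of the previous network). The base case $n=1$ is immediate: $f^{[1]}=(\overline{x}_1)$ has dynamics $0\to 1\to 0$, a Hamiltonian cycle of length $2^1$, and satisfies $f^{[1]}(x)=\overline{x}=\overline{f^{[1]}(\overline{x})}$, so it is self-dual in $[1]$. For the inductive step, assume $f^{[n]}$ is a Hamiltonian cycle that is self-dual in $[n]$. The first thing I would do is make the action of $f^{[n+1]}$ explicit. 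Evaluating the conjunctive clause $c_z$ (which equals $1$ only at $z$) and the disjunctive clause $d_{\overline z}$ (which equals $0$ only at $\overline z$) in the defining formula of \Cref{construccion} (read for every coordinate $i\in[n+1]$, as confirmed by \Cref{fig:h3ed}) shows that $f^{[n+1]}(x)=h^{[n+1]}(x)$ for every $x\notin\{z,\overline z\}$, while $f^{[n+1]}(z)=\vec{1}$ and $f^{[n+1]}(\overline z)=\vec{0}$. In other words, $f^{[n+1]}$ is obtained from $h^{[n+1]}$ by swapping the images of $z$ and $\overline z$, which are precisely the preimages of $\vec{0}$ and $\vec{1}$ under $h^{[n+1]}$.

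For the Hamiltonian-cycle property, I would observe that $h^{[n+1]}$ runs $f^{[n]}$ on the first $n$ coordinates while freezing $x_{n+1}$, so its dynamics consist of exactly two disjoint cycles $C_0$ and $C_1$ — on the configurations with $x_{n+1}=0$ and with $x_{n+1}=1$, respectively — each a copy of the length-$2^n$ cycle of $f^{[n]}$. Since $z$ has $x_{n+1}=0$ and $\overline z$ has $x_{n+1}=1$, the arc $z\to\vec{0}$ lies in $C_0$ and the arc $\overline z\to\vec{1}$ lies in $C_1$; here I use the induction hypothesis to identify $\overline z$ as the preimage of $\vec{1}$ in $C_1$, because self-duality of $f^{[n]}$ gives $f^{[n]}(\overline{z_{[n]}})=\overline{f^{[n]}(z_{[n]})}=\vec{1}$. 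The swap replaces these two arcs by $z\to\vec{1}$ and $\overline z\to\vec{0}$, i.e.\ by a cross-connection between the two cycles. Concatenating the Hamiltonian path of $C_0$ from $\vec{0}$ to $z$, the arc $z\to\vec{1}$, the Hamiltonian path of $C_1$ from $\vec{1}$ to $\overline z$, and the arc $\overline z\to\vec{0}$ then produces a single cycle of length $2^n+2^n=2^{n+1}$, so $f^{[n+1]}$ is a Hamiltonian cycle.

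For self-duality in $[n+1]$, I would first check that the auxiliary network is itself self-dual: since $h^{[n+1]}(x)=(f^{[n]}(x_{[n]}),x_{n+1})$ and $f^{[n]}$ is self-dual in $[n]$, a direct computation gives $\overline{h^{[n+1]}(\overline x)}=(\overline{f^{[n]}(\overline{x_{[n]}})},x_{n+1})=(f^{[n]}(x_{[n]}),x_{n+1})=h^{[n+1]}(x)$. The identity $f^{[n+1]}(x)=\overline{f^{[n+1]}(\overline x)}$ then follows by a short case analysis. If $x\notin\{z,\overline z\}$ (equivalently $\overline x\notin\{z,\overline z\}$), both sides reduce to the already-verified self-duality of $h^{[n+1]}$. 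If $x=z$, then $f^{[n+1]}(z)=\vec{1}$ and $f^{[n+1]}(\overline z)=\vec{0}$, so $\overline{f^{[n+1]}(\overline z)}=\overline{\vec{0}}=\vec{1}=f^{[n+1]}(z)$; the case $x=\overline z$ is symmetric. This establishes self-duality in $[n+1]$ and closes the induction.

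The main obstacle is the Hamiltonian-cycle step: one must verify that the two arcs altered by the construction genuinely belong to \emph{different} cycles of $h^{[n+1]}$ and are reconnected crosswise, since only then does the surgery merge rather than split the cycles. This hinges on the bookkeeping that the complement $\overline z$ of the preimage of $\vec{0}$ in $C_0$ is exactly the preimage of $\vec{1}$ in $C_1$, which is precisely where the inductive self-duality of $f^{[n]}$ is indispensable; without it the modified arcs could lie in the same cycle and fragment it. The self-duality step, by contrast, is routine once the explicit description of $f^{[n+1]}$ via the clauses $c_z$ and $d_{\overline z}$ is in hand.
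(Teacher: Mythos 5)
Your proposal is correct and follows essentially the same route as the paper's proof: induction on $n$, showing that $f^{[n+1]}$ agrees with $h^{[n+1]}$ away from $z,\overline z$, that the swap of the images of $z$ and $\overline z$ merges the two length-$2^n$ cycles of $h^{[n+1]}$ into one Hamiltonian cycle, and that self-duality follows from the inductive hypothesis plus the two exceptional configurations. Your version is in fact slightly more careful than the paper's on the key bookkeeping point (that $\overline z$ is the preimage of $\vec 1$ in the $x_{n+1}=1$ cycle, via self-duality of $f^{[n]}$), which the paper leaves implicit.
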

\begin{proof}
    {\bf Base Case \(n = 1, 2, 3\)}: For \(n = 1\), the network \(f^{[1]} = (\overline{x}_1)\) is trivially a Hamiltonian cycle and self-dual in \([1]\). It is explicitly verified that \(f^{[2]}\) and \(f^{[3]}\) are Hamiltonian cycles and self-dual in \([2]\) and \([3]\), respectively, according to the local activation functions:
    \begin{align*}
        f^{[2]}_1(x) &= \overline{x}_2, \\
        f^{[2]}_2(x) &= x_1, \\
        f^{[3]}_1(x) &= (\overline{x}_1 \land \overline{x}_2) \lor (\overline{x}_2 \land \overline{x}_3) \lor (\overline{x}_1 \land \overline{x}_3), \\
        f^{[3]}_2(x) &= (x_1 \land x_2) \lor (x_1 \land \overline{x}_3) \lor (x_2 \land \overline{x}_3), \\
        f^{[3]}_3(x) &= (\overline{x}_1 \land x_3) \lor (x_2 \land x_3) \lor (\overline{x}_1 \land x_2).
    \end{align*}

    {\bf Induction for \(n \geq 3\)}: Assume that \(f^{[n]}\) is a Hamiltonian cycle and self-dual in \([n]\), and prove that \(f^{[n+1]}\) also satisfies this property.

    {\bf Hamiltonian Cycle}: Let \(c_z, d_{\overline{z}}\) be the clauses defined for \(f^{[n+1]}\). Observe that:
    \[
    f^{[n+1]}(z) = \vec{1}, \quad f^{[n+1]}(\overline{z}) = \vec{0}.
    \]
    For \(x \in \{0,1\}^{n+1} \setminus \{z, \overline{z}\}\), the network reduces to:
    \begin{equation} \label{simplif}
        f^{[n+1]}(x) = (f^{[n]}(x_{[n]}), x_{n+1}).
    \end{equation}
    Since \(f^{[n]}\) is a Hamiltonian cycle, the configuration \(\vec{0}_{[n]}\) reaches all configurations \(u \in \{0,1\}^{n+1}\) with \(u_{n+1} = 0\). Similarly, \(\vec{1}_{[n]}\) reaches all configurations \(v \in \{0,1\}^{n+1}\) with \(v_{n+1} = 1\). Finally, since \(f^{[n+1]}(z) = \vec{1}\) and \(f^{[n+1]}(\overline{z}) = \vec{0}\), the dynamics are strongly connected, and \(f^{[n+1]}\) is a Hamiltonian cycle.

    {\bf Self-Duality}: Let \(x \in \{0,1\}^{n+1}\setminus \{z, \overline{z}\}\) be arbitrary, we can use \eqref{simplif} and the inductive hypothesis:
    \begin{align*}
        f^{[n+1]}(x) &= (f^{[n]}(x_{[n]}), x_{n+1}) \\
        &= \overline{(f^{[n]}(\overline{x}_{[n]}) \overline{x}_{n+1})} \\
        &= \overline{f^{[n+1]}(\overline{x})}.
    \end{align*}
    On the other hand, due to the definition of $f^{[n+1]}$ we know that $f^{[n+1]}(z)=\overline{f^{[n+1]}(\overline{z})}$ and we prove self-duality. Hence, \(f^{[n+1]}\) is a Hamiltonian cycle and self-dual.
\end{proof}

A distinctive feature of the networks in \Cref{construccion} is that, except for \(n = 2\), they always generate complete digraphs including loops. This result, derived from the self-dual structure of these networks, is formalized below.

\begin{corollary}
   \(G(f^{[n]}) = K_n\), \(n \in \mathbb{N} \setminus \{2\}\).
\end{corollary}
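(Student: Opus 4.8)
The plan is to combine \Cref{teocidu} with \Cref{prop714} to pin down the interaction graph exactly. By \Cref{teocidu}, the network $f^{[n]}$ is a Hamiltonian cycle and self-dual in $[n]$ for every $n \in \N$. For $n \neq 2$, \Cref{prop714} applies directly: since $f^{[n]}$ is a Hamiltonian cycle Boolean network that is self-dual in $I = [n]$, every index $j \in [n]$ satisfies $d^-(j) = n$. This immediately forces $G(f^{[n]})$ to contain all $n^2$ possible arcs (including loops), which is precisely the complete digraph with loops $K_n$.

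Concretely, I would first invoke \Cref{teocidu} to certify that the hypotheses of \Cref{prop714} are met for the specific family $f^{[n]}$: it is a Hamiltonian cycle and self-dual in $[n]$, and we restrict to $n \in \N \setminus \{2\}$ exactly to satisfy the degree hypothesis $n \in \N \setminus \{2\}$ of \Cref{prop714}. Then I would note that $d^-(j) = n$ for all $j \in [n]$ means that for each vertex $j$, all $n$ possible arcs $(i,j)$ with $i \in [n]$ are present in $G(f^{[n]})$. Ranging over all $j \in [n]$ exhausts every ordered pair $(i,j) \in [n] \times [n]$, which is the arc set of $K_n$ (loops included, since $i = j$ is allowed). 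Hence $G(f^{[n]}) = K_n$.

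There is essentially no obstacle here; this corollary is a direct specialization of the general \Cref{prop714} to the explicitly constructed family, and the remark immediately following the proof of \Cref{prop714} already records exactly this consequence for the self-dual-in-$[n]$ case. The only point requiring minor care is the exclusion of $n = 2$, which must be stated since \Cref{prop714} is proved only for $n \in \N \setminus \{2\}$; indeed \Cref{lem6_2} shows that for $n = 2$ the graph is a two-cycle without loops rather than $K_2$, confirming that the exclusion is genuine and not merely an artifact of the proof technique.

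\begin{proof}
    By \Cref{teocidu}, for every $n \in \N$ the network $f^{[n]}$ is a Hamiltonian cycle and self-dual in $[n]$. Fix $n \in \N \setminus \{2\}$. Applying \Cref{prop714} with $I = [n]$, we obtain $d^-(j) = n$ for every $j \in [n]$, meaning that for each $j$ all arcs $(i,j)$ with $i \in [n]$ belong to $G(f^{[n]})$. Letting $j$ range over $[n]$ shows that every ordered pair $(i,j) \in [n] \times [n]$ is an arc of $G(f^{[n]})$, including the loops where $i = j$. Therefore $G(f^{[n]}) = K_n$.
\end{proof}
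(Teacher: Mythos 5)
Your proof is correct and follows essentially the same route as the paper: invoke \Cref{teocidu} to establish self-duality in $[n]$, then apply \Cref{prop714} to conclude $d^-(j)=n$ for all $j$, hence $G(f^{[n]})=K_n$. The only cosmetic difference is that the paper treats $n=1$ as a separate immediate case rather than folding it into the application of \Cref{prop714}, which does not affect correctness since \Cref{prop714} is stated for all $n \in \mathbb{N} \setminus \{2\}$.
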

\begin{proof}
    For \(n = 1\), \(f^{[1]} = (\overline{x}_1)\) and the result is immediate.

    For \(n \geq 3\), \Cref{teocidu} establishes that \(f^{[n]}\) is self-dual in \([n]\). By \Cref{prop714} it is concluded that for all \(j \in [n]\) the in-degree satisfies \(d^-(j) = n\) and implying \(G(f^{[n]}) = K_n\).
\end{proof}

From this construction, certain properties justify the unate nature of the network. Describing \(f^{[n]}_j\), with \(j \in [n]\), as the concatenation of a conjunctive and a disjunctive clause of size \(k \in \{j, j+1, \ldots, n\}\) over the variable \(x_j\), it is possible to infer the value of \(f^{[n]}_j\) when evaluated at an arbitrary configuration. This requires projecting the first \(k\) variables of the configuration to be evaluated. By a case analysis, given \(z^{[n]} \in \{0,1\}^n\) such that \(f^{[n]}(z^{[n]}) = \vec{1}\), the cases where there exist \(i, j \in [n]\) satisfying the inequality \(f^{[n]}_j(z^{[n]}) \neq f^{[n]}_j(z^{[n]} \oplus e_i)\) are summarized in \Cref{table0001}.

\begin{table}[htb!]
    \centering
    \begin{tabular}{|p{1.2cm} |p{1.2cm}|p{1.2cm}|p{1.2cm}|p{1.2cm}|p{1.2cm}|p{1.2cm}|}
        \hline
        &\multicolumn{2}{c|}{$i>j$} & \multicolumn{2}{c|}{$i=j$} & \multicolumn{2}{c|}{$i<j$} \\
        \hline
        &$i$ even&$i$ odd
        &$i$ even&$i$ odd
        &$j$ even&$j$ odd \\
        \hline
        n even
        &True&False&False &True&True&False \\
        \hline
        n odd
        &False&True&True &False&False&True \\
        \hline
    \end{tabular}
    \caption{Summary of cases where there exist \(i, j \in [n]\) such that \(f^{[n]}_j(z^{[n]}) \neq f^{[n]}_j(z^{[n]} \oplus e_i)\).}
    \label{table0001}
\end{table}

By the definition of $f^{[n]}$, we have $f^{[n]}_i(x)=(h^{[n]}_i(x) \land d_{\overline{z}^{[n]}}(x)) \lor c_{z^{[n]}}(x)$. Since $c_{z^{[n]}}(z^{[n]}) = 1$ and $d_{\overline{z}^{[n]}}(\overline{z}^{[n]}) = 0$, 
$f^{[n]}(z^{[n]}) = 1$ and $f^{[n]}(\overline{z}^{[n]}) = 0$.
 Such behavior extends to $f^{[k]}$ for $k \in \{j, j+1, \ldots, n-1\}$ when evaluated at $z^{[k]}$ and $\overline{z}^{[k]}$. Following this approach, we generalize \Cref{table0001} to arbitrary configurations.

\begin{definition}
    For \(x \in \{0,1\}^n\), \(k_x \in \{1,2,...,n\}\) denotes the largest value such that \(x_{[k_x]} = (0,1,0,\ldots)\) or \(x_{[k_x]} = (1,0,1,\ldots)\).
\end{definition}

We denote $t_k\in\{0,1\}^k$ as either $\overline{0}^E$ or $\overline{1}^E$. The recursive construction of \(f^{[n]}\) incorporates clauses defined from the configuration \(z^{[n]}\) and the following result proves that \(z^{[n]} = t_n\), contributing to understanding the construction of \(f^{[n]}\).

\begin{lemma} \label{lem0015}
    The configurations \(z^{[n]}\) and \(\overline{z}^{[n]}\) from \Cref{construccion} are of type \(t_k\).
    This establishes conjunctive clauses \(c_{t_k}\) and disjunctive clauses \(d_{\overline{t_k}}\) at each recursive step of the construction.
\end{lemma}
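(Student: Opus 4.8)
The plan is to collapse the implicit, preimage-based definition of the configurations $z^{[n]}$ into a single explicit recursion and then induct on $n$, tracking the alternating pattern. The engine driving the argument is the self-duality of $f^{[n]}$ already established in \Cref{teocidu}.

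First I would prove the recursion $z^{[n+1]} = (\overline{z^{[n]}},\,0)$. By \Cref{construccion}, the first $n$ coordinates of $z^{[n+1]}$ form the preimage of $\vec{0}$ under $f^{[n]}$, and its last coordinate is $0$. By \Cref{teocidu}, $f^{[n]}$ is a Hamiltonian cycle (hence a bijection) and self-dual in $[n]$, so $f^{[n]}(\overline{x}) = \overline{f^{[n]}(x)}$ for every $x$. Since the construction forces $f^{[n]}(z^{[n]}) = \vec{1}$ through the conjunctive clause $c_{z^{[n]}}$, self-duality gives $f^{[n]}(\overline{z^{[n]}}) = \overline{f^{[n]}(z^{[n]})} = \vec{0}$. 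Bijectivity makes $\overline{z^{[n]}}$ the unique preimage of $\vec{0}$, so the first $n$ coordinates of $z^{[n+1]}$ are exactly $\overline{z^{[n]}}$, which yields the recursion.

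Next I would carry out the induction on the oscillation property. The base case is a direct computation from $f^{[1]} = (\overline{x}_1)$: the preimage of $\vec{0}$ is $(1)$, so $z^{[2]} = (1,0) = t_2$, whose last coordinate is $0$. For the inductive step, assume $z^{[n]}$ is of type $t_k$ (that is, $z^{[n]} = t_n$) with final coordinate $z^{[n]}_n = 0$; the latter is guaranteed by \Cref{construccion}, which fixes the last coordinate of each $z$ to be $0$. Then $\overline{z^{[n]}}$ is again oscillating and its last coordinate equals $1$, so appending a $0$ continues the alternation: $z^{[n+1]} = (\overline{z^{[n]}},0)$ is an oscillating word of length $n+1$, i.e. $z^{[n+1]} = t_{n+1}$, again ending in $0$. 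This closes the induction, so $z^{[n]}$ is of type $t_k$; since the complement of an oscillating word is oscillating, $\overline{z^{[n]}}$ is of type $\overline{t_k}$. In particular the clauses built in \Cref{construccion} at step $n$ are precisely $c_{t_k}$ and $d_{\overline{t_k}}$.

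I expect the only genuine obstacle to be establishing the recursion $z^{[n+1]} = (\overline{z^{[n]}},0)$, since that is what converts the implicit preimage definition of \Cref{construccion} into an explicit step; it relies on correctly combining the preimage characterization, the identity $f^{[n]}(z^{[n]}) = \vec{1}$, self-duality, and the bijectivity of the Hamiltonian cycle to pin down the preimage uniquely. Once this is in place, the remainder is routine parity bookkeeping, whose single delicate point is maintaining the invariant that the last coordinate of $z^{[n]}$ is always $0$, which is exactly what makes appending a $0$ preserve the oscillation.
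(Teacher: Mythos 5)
Your proposal is correct and follows essentially the same route as the paper: induction on $n$ driven by the recursion $z^{[n+1]} = (\overline{z}^{[n]}, 0)$, with the same base case $z^{[2]} = (1,0)$ and the observation that the last coordinate is always $0$ so that complementing and appending a $0$ preserves the alternation. The only cosmetic difference is that you derive $f^{[n]}(\overline{z}^{[n]}) = \vec{0}$ via self-duality and bijectivity, whereas the paper reads it off directly from the clause $d_{\overline{z}^{[n]}}$ in \Cref{construccion}; both are valid.
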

\begin{proof}
    For the base case \(n = 2\), these clauses are induced from \(z^{[2]} = (1,0)\) and its negation.
    Assume by induction on \(n \geq 2\) that for all \(k \leq n\), \(z^{[k]} = t_k\). By construction, the last component of any \(z^{[k]}\) has value zero, and by the induction hypothesis \(z^{[n]} = t_n = (t_{n-1}, 0)\). Given that \(f^{[n+1]}(\overline{z}^{[n]}, 0) = \vec{1}\), it follows that \(z^{[n+1]} = (\overline{z}^{[n]}, 0) = (\overline{t_n}, 0) = t_{n+1}\), completing the proof.
\end{proof}

\Cref{lem0015} establishes a periodic structure of \(z^{[n]}\) and \(\overline{z}^{[n]}\) as $n$ grows and gives first insights into predicting the value of \(f_j(x)\). We will proceed to study the effect of the change in a component $i\in[n]$ of \(f_j(x\oplus e_i)\) in order to establish a new tool.

\begin{lemma} \label{lem0014}
    Given \(f^{[n]}\) described in \Cref{construccion}, \(i \in [n]\), and \(x \in \{0,1\}^n\) arbitrary:
    \begin{enumerate}
        \item For all \(i \in [n] \setminus \{j\}\) such that \(k_x, k_{x \oplus e_i} < j\), it holds that \(f^{[n]}_j(x) = f^{[n]}_j(x \oplus e_i)\), \label{inc00001}
        \item \(k_x, k_{x \oplus e_i} < i\) if and only if \(k_x = k_{x \oplus e_i}\), \label{inc00002}
        \item if \(k_x = k_{x \oplus e_i}\), then for each \(j \in [n] \setminus \{i\}\), it holds that \(f^{[n]}_j(x) = f^{[n]}_j(x \oplus e_i)\). \label{inc00003}
    \end{enumerate}
\end{lemma}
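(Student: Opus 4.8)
The plan is to reduce all three items to a single closed-form \emph{evaluation rule} for the local functions, and then to a careful analysis of how flipping one coordinate changes the length $k_x$ of the maximal alternating prefix. Concretely, I would first prove that for every $j\in[n]$ and every $x\in\{0,1\}^n$,
\[
f^{[n]}_j(x)=
\begin{cases}
x_j, & k_x<j,\\
\overline{x_{k_x}}, & k_x\ge j.
\end{cases}
\]
This packages the information already gathered in \Cref{table0001} and \Cref{lem0015} into a form that is directly usable for all three items.

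To obtain the rule I would unfold the recursion of \Cref{construccion}: for $m>j$ one has $f^{[m]}_j(x)=(f^{[m-1]}_j(x_{[m-1]})\land d_{\overline{z}^{[m]}}(x))\lor c_{z^{[m]}}(x)$, with base $f^{[1]}_1=\overline{x_1}$ and, for $j\ge 2$, $f^{[j]}_j(x)=(x_j\land d_{\overline{z}^{[j]}}(x))\lor c_{z^{[j]}}(x)$. By \Cref{lem0015}, $z^{[m]}=t_m$, so $c_{z^{[m]}}(x)=1$ iff $x_{[m]}=t_m$ and $d_{\overline{z}^{[m]}}(x)=0$ iff $x_{[m]}=\overline{t_m}$; hence whenever $x_{[m]}$ is \emph{not} alternating (equivalently $m>k_x$) the $m$-th level is transparent: $f^{[m]}_j(x)=f^{[m-1]}_j(x_{[m-1]})$. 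Reading the levels from $n$ down to $j$: if $k_x<j$ (which forces $j\ge 2$) every level is transparent and the base collapses to $x_j$, since $d_{\overline{z}^{[j]}}(x)=1$ and $c_{z^{[j]}}(x)=0$; if $k_x\ge j$, all levels above $k_x$ are transparent and at level $k_x$ exactly one clause fires, yielding $1$ when $x_{[k_x]}=z^{[k_x]}=t_{k_x}$ and $0$ when $x_{[k_x]}=\overline{t_{k_x}}$. As $t_{k_x}$ ends in $0$ and an alternating word is determined by its last coordinate, this common value equals $\overline{x_{k_x}}$, which is the rule.

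Items \ref{inc00001} and \ref{inc00003} are then short. For \ref{inc00001}, $k_x,k_{x\oplus e_i}<j$ place both evaluations in the first branch, so $f^{[n]}_j(x)=x_j$ and $f^{[n]}_j(x\oplus e_i)=(x\oplus e_i)_j=x_j$ because $i\neq j$. For \ref{inc00003}, I would first invoke item \ref{inc00002} to get $k_x=k_{x\oplus e_i}=:k<i$; then for $j\neq i$ either $k<j$, in which case both values equal $x_j=(x\oplus e_i)_j$, or $k\ge j$, in which case both values equal $\overline{x_k}=\overline{(x\oplus e_i)_k}$, since $k<i$ forces $k\neq i$ and so coordinate $k$ is untouched by $e_i$.

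The real work is item \ref{inc00002}, and this is the step I expect to be the main obstacle. The key structural fact is that flipping $x_i$ toggles the alternation indicator $[x_p\neq x_{p-1}]$ only at $p=i$ and $p=i+1$, leaving the prefix $x_{[i-1]}$ and all indicators at $p\le i-1$ unchanged. For the direction $(\Leftarrow)$ I would argue by contrapositive: if $k_x\ge i$ (the symmetric case $k_{x\oplus e_i}\ge i$ follows by applying the argument to $x\oplus e_i$), then $x_{[i]}$ is alternating, so flipping $x_i$ creates a break at position $i$, whence $k_{x\oplus e_i}=i-1\neq k_x$. For $(\Rightarrow)$, assuming $k_x,k_{x\oplus e_i}<i$, I would locate the first break $p\le i$ of $x$; the case $p=i$ is impossible, because removing that break by the flip would push $k_{x\oplus e_i}$ to at least $i$, contradicting $k_{x\oplus e_i}<i$; hence $p<i$, the break lies in the untouched prefix, and $k_{x\oplus e_i}=p-1=k_x$. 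The care needed is entirely in the boundary bookkeeping: the position $i=1$ (no indicator at $p=i$, to be handled through the indicator at $p=2$) and $i=n$ (no indicator at $p=i+1$), together with the degenerate case $n=1$; apart from these, the toggling description makes every case mechanical.
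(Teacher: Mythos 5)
Your proposal is correct, and its core device --- the closed-form evaluation rule $f^{[n]}_j(x)=x_j$ when $k_x<j$ and $f^{[n]}_j(x)=\overline{x_{k_x}}$ when $k_x\ge j$ --- is a genuine departure from the paper's argument. The paper never extracts such a formula: it proves item (\ref{inc00001}) by observing that all clauses occurring in $f^{[n]}_j$ have size at least $j$ and hence are inert when $k_x,k_{x\oplus e_i}<j$, and it proves item (\ref{inc00003}) for $j\le k_x$ by checking that each clause $c_{t_k}$, $d_{\overline{t_k}}$ with $k\ge j$ takes the same value at $x$ and at $x\oplus e_i$, without ever identifying what the common value of $f^{[n]}_j$ is. Your unfolding of the recursion of \Cref{construccion} (levels $m>k_x$ are transparent because neither clause fires on a non-alternating prefix; at level $k_x$ exactly one clause is decisive, and since $z^{[k_x]}$ ends in $0$ the outcome is $\overline{x_{k_x}}$) proves strictly more, and the surplus is useful: the rule immediately yields \Cref{lem0016} and would shorten the case analysis in \Cref{teo0014}. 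For item (\ref{inc00002}) the two arguments coincide in substance, but yours is the more careful one --- the paper's converse direction is little more than a restatement of the claim --- and you rightly isolate the boundary positions $i=1$ and $i=n$; note in passing that for $i=n=1$ item (\ref{inc00002}) is actually false as stated (one has $k_x=k_{x\oplus e_1}=1$ while neither quantity is $<1$), a degeneracy the paper silently ignores and which is harmless here since items (\ref{inc00001}) and (\ref{inc00003}) are vacuous in that case.
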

\begin{proof}
    (\ref{inc00001}) Suppose \(k_x, k_{x \oplus e_i} < j\), noting that the conjunctive and disjunctive clauses in \(f^{[n]}_j\) do not contribute to the evaluation of \(x\) and \(x \oplus e_i\), as these clauses are of size \(j\) and above. Hence, the value of \(f^{[n]}_j\) depends only on the variable \(x_j\), implying:
    \[
    f^{[n]}_j(x \oplus e_i) = (x \oplus e_i)_j = x_j = f^{[n]}_j(x).
    \]

    (\ref{inc00002}.1) If \(k_x, k_{x \oplus e_i} < i\), note that \(x\) and \(x \oplus e_i\) differ only in component \(i\). If \(k_x \neq k_{x \oplus e_i}\), this implies there exists \(j < i\) where \(x\) and \(x \oplus e_i\) differ, leading to a contradiction. Thus, \(k_x = k_{x \oplus e_i}\).

    (\ref{inc00002}.2) Assume \(k_x = k_{x \oplus e_i}\). If the change in component \(i\) of \(x\) does not alter the value of \(k_x\), then \(k_x, k_{x \oplus e_i} < i\), as desired.

    (\ref{inc00003}) Let \(k_x = k_{x \oplus e_i}\) and \(j \in [n] \setminus \{i\}\). 
    
    If \(j > k_x\), from (\ref{inc00001}) it follows that \(f^{[n]}_j(x) = f^{[n]}_j(x \oplus e_i)\).
    
    If \(j \leq k_x\), the value of \(f^{[n]}_j(x)\) is determined by \(x_j\) or the conjunctive and disjunctive clauses. By \Cref{lem0015}, these clauses are of type \(c_{t_k}\) and \(d_{\overline{t_k}}\), with \(k \geq j\). Since \(k_x = k_{x \oplus e_i}\), it holds that \(x_j = (x \oplus e_i)_j\), and thus the clauses of \(f^{[n]}_j(x)\) and \(f^{[n]}_j(x \oplus e_i)\) take the same values:
    \[
        c_{t_k}(x) = c_{t_k}(x \oplus e_i), \quad d_{\overline{t_k}}(x) = d_{\overline{t_k}}(x \oplus e_i).
    \]
    Therefore, \(f^{[n]}_j(x) = f^{[n]}_j(x \oplus e_i)\).
\end{proof}

\Cref{lem0014} is further refined considering the parity of \(n\) and the parity of the altered index in \(x \in \{0,1\}^n\). \Cref{lem0016} establishes how the conjunctive and disjunctive clauses interact with the size of the oscillating configuration \(x_{[k_x]}\) for predicting the evaluation of the local activation function.

\begin{lemma} \label{lem0016}
    For the Boolean network \(f^{[n]}\), \(j \in [n]\), and \(x \in \{0,1\}^n\) such that \(k_x \geq j\), it holds that:
    \begin{enumerate}
        \item Assuming \(n\) is even and \(x_{[k_x]} = z^{[k_x]}\), or \(n\) is odd and \(x_{[k_x]} = \overline{z}^{[k_x]}\),
        then \(f^{[n]}_j(x \oplus e_p) = 0\) and \(f^{[n]}_j(x \oplus e_q) = 1\). \label{eq:00006}
        \item Conversely, if \(n\) is odd and \(x_{[k_x]} = z^{[k_x]}\), or \(n\) is even and \(x_{[k_x]} = \overline{z}^{[k_x]}\), then \(f^{[n]}_j(x \oplus e_p) = 1\) and \(f^{[n]}_j(x \oplus e_q) = 0\). \label{eq:00007}
    \end{enumerate}
    Where \(p, q \in \{j+1, \ldots, k_x\}\) such that \(p\) is even and \(q\) is odd.
\end{lemma}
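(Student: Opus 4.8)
The plan is to reduce the evaluation of \(f^{[n]}_j\) at a flipped configuration to a single decisive clause, exploiting the nested clause structure of \Cref{construccion} together with the alternating pattern of \(z^{[n]}\) recorded in \Cref{lem0015}. First I would establish an \emph{evaluation principle}: for any \(y \in \{0,1\}^n\) whose maximal alternating prefix has length \(k_y \geq j\), the value \(f^{[n]}_j(y)\) is governed solely by that prefix, namely \(f^{[n]}_j(y) = 1\) if \(y_{[k_y]} = t_{k_y}\) and \(f^{[n]}_j(y) = 0\) if \(y_{[k_y]} = \overline{t_{k_y}}\); equivalently \(f^{[n]}_j(y) = \overline{y_{k_y}}\), since \(t_{k_y}\) ends in \(0\) and \(\overline{t_{k_y}}\) ends in \(1\). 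This follows because, recalling from \Cref{lem0015} that the size-\(k\) conjunctive and disjunctive clauses appearing in \(f^{[n]}_j\) are \(c_{t_k}\) and \(d_{\overline{t_k}}\), every clause of size \(k > k_y\) is inactive on \(y\) (the prefix \(y_{[k]}\) is not alternating, hence is neither \(t_k\) nor \(\overline{t_k}\)), whereas the size-\(k_y\) clause, or the base function when \(k_y = j\), fixes the output according to whether \(y_{[k_y]}\) equals \(t_{k_y}\) or its complement.

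Next I would compute the prefix lengths of the flipped configurations. For \(p \in \{j+1,\dots,k_x\}\), flipping the \(p\)-th coordinate of the alternating block \(x_{[k_x]}\) destroys the alternation exactly at position \(p\): the prefix \((x\oplus e_p)_{[p-1]} = x_{[p-1]}\) is still alternating, while \((x\oplus e_p)_{p-1} = (x\oplus e_p)_{p}\), so \(k_{x\oplus e_p} = p-1\), and since \(p \geq j+1\) this length is at least \(j\). (That the prefix length must change under such a flip is the contrapositive of the second assertion of \Cref{lem0014}; the precise value \(p-1\) is immediate from the alternation.) The evaluation principle then gives \(f^{[n]}_j(x\oplus e_p) = \overline{(x\oplus e_p)_{p-1}} = \overline{x_{p-1}}\), and likewise \(f^{[n]}_j(x\oplus e_q) = \overline{x_{q-1}}\). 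Because \(x_{[k_x]}\) is alternating, \(x_m = x_1\) when \(m\) is odd and \(x_m = \overline{x_1}\) when \(m\) is even; applying this with \(m = p-1\) (odd, as \(p\) is even) and \(m = q-1\) (even, as \(q\) is odd) yields the clean identities \(f^{[n]}_j(x\oplus e_p) = \overline{x_1}\) and \(f^{[n]}_j(x\oplus e_q) = x_1\).

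It then remains to match the two hypotheses against the single bit \(x_1\). Using the alternating form of \(z^{[n]}\) from \Cref{lem0015}, the length-\(k_x\) restriction of \(z^{[n]}\) is \(t_{k_x}\) when \(k_x \equiv n \pmod 2\) and \(\overline{t_{k_x}}\) otherwise, so the leading bit of the relevant reference configuration is controlled by the parity of \(n\). Tracking complementations shows that each of the conditions ``\(n\) even and \(x_{[k_x]} = z^{[k_x]}\)'' and ``\(n\) odd and \(x_{[k_x]} = \overline{z}^{[k_x]}\)'' forces \(x_1 = 1\), whence \(f^{[n]}_j(x\oplus e_p)=0\) and \(f^{[n]}_j(x\oplus e_q)=1\), which is the first assertion of the lemma; the complementary pair of conditions forces \(x_1 = 0\) and gives the second assertion in the same way.

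The main obstacle is this final translation step: the bookkeeping that converts the parity of \(n\) together with the two sub-cases of each hypothesis into a single value of \(x_1\). The delicate point is that the decisive reference string is the prefix of the ambient \(z^{[n]}\), whose type as an alternating word flips with the parity of \(n\) relative to \(k_x\) via \Cref{lem0015}; one must keep the complementations in the evaluation principle, in the value of \(x_{p-1}\), and in this prefix mutually consistent. I would also verify the boundary situations \(p-1 = j\) and \(j = 1\), where the decisive clause is the base function rather than a size-\(k\) clause, to confirm that the evaluation principle still returns \(\overline{y_{k_y}}\) there.
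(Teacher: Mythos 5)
Your proposal is correct and rests on the same mechanism as the paper's proof: the clauses of $f^{[n]}_j$ are nested by size, the outermost one that fires is the one at level $k_y$, and flipping coordinate $i$ of an alternating prefix gives $k_{x\oplus e_i}=i-1$. What you do differently is to package this as a uniform evaluation principle $f^{[n]}_j(y)=\overline{y_{k_y}}$ for $k_y\geq j$ and then reduce both parts of the lemma to the single bit $x_1$ via the alternation of $x_{[k_x]}$; the paper instead identifies the firing clause case by case (tracking the alternation $c_{z_{[n]}}, c_{\overline{z}_{[n-1]}},\dots$) for part (1) and obtains part (2) from part (1) by self-duality. Your formulation is cleaner and more self-contained, at the cost of the final parity bookkeeping you rightly flag as the delicate step; you also resolve correctly the notational ambiguity in the statement, reading $z^{[k_x]}$ as the prefix $z^{[n]}_{[k_x]}$ of the ambient configuration (as the paper's own proof does), which is the reading under which the lemma is true. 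Two small cautions: first, in the boundary case $k_y=j\geq 2$ the decisive object is \emph{not} the base function $x_j$ (which would give $y_j=y_{k_y}$, the wrong sign) but the size-$j$ clause pair $c_{z^{[j]}},d_{\overline{z}^{[j]}}$ introduced when variable $j$ enters the construction, and it is this clause that restores $\overline{y_{k_y}}$; the base function only decides when $j=1$ and $k_y=1$, where it happens to equal $\overline{y_1}$. Second, since $t_k$ denotes either alternating word, your evaluation principle needs the normalization $t_k=z^{[k]}$ (ending in $0$) made explicit, which you do via the clause ``since $t_{k_y}$ ends in $0$.'' With those points checked, the argument goes through.
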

\begin{proof}
    Let \(i \in \{j+1, \ldots, k_x\}\) and note that the parity or oddness of \(k_{x \oplus e_i} = i - 1\) depends on \(i\).
    From the proof of \Cref{lem0015}, and denoting \(z = z^{[n]}\), \(f^{[n]}_j\) includes conjunctive clauses \(c_{z_{[n]}}\), \(c_{\overline{z}_{[n-1]}}\), \(c_{z_{[n-2]}}\), alternating in negation up to index \(j\). Similarly, the disjunctive clauses in \(f^{[n]}_j\) alternate as \(d_{\overline{z}_{[n]}}\), \(d_{z_{[n-1]}}\), \(d_{\overline{z}_{[n-2]}}\), also up to index \(j\). Denoting \(p, q \in \{j+1, \ldots, k_x\}\) as described in the statement:

    (\ref{eq:00006}) If \(n\) is even and \(x_{[k_x]} = z_{[k_x]}\), this implies that \(f^{[n]}_j(x \oplus e_p) = d_{z_{[p-1]}}(z) = 0\).
    This result follows since \((x \oplus e_p)_{[p-1]} = x_{[p-1]} = z_{[p-1]}\), and because \(n\) is even, the clause of size \(p-1\) (odd) activated by evaluating \(z_{[p-1]}\) corresponds to \(d_{z_{[p-1]}}\).
    Similarly, for \(n\) odd and \(x_{[k_x]} = \overline{z}_{[k_x]}\), we deduce that
    \(f^{[n]}_j(x \oplus e_q) = c_{z_{[q-1]}}(z) = 1\).

    (\ref{eq:00007}) Since \(f^{[n]}\) is self-dual in \([n]\), applying (1) yields the desired result.
\end{proof}

The results obtained are sufficient to demonstrate that the family of BNs \(f^{[n]}\) is unate.

\begin{theorem} \label{teo0014}
    The Boolean network $f^{[n]}$ is unate.
\end{theorem}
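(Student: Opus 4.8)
The plan is to show that for every ordered pair $i,j\in[n]$ the local activation function $f^{[n]}_j$ has a single, well-defined sign with respect to the component $i$; unateness of the whole network is then just the conjunction of these statements over all $i$ and $j$. So I would fix $i$ and $j$ and prove that whenever $f^{[n]}_j(x)\neq f^{[n]}_j(x\oplus e_i)$ for some $x$, the change always points in the same direction relative to the value of $x_i$. The first step is to localize where such a dependence can occur. By the third statement of \Cref{lem0014}, a change in component $i$ can alter $f^{[n]}_j$ only when $k_x\neq k_{x\oplus e_i}$, and by the second statement this forces the flip at $i$ to sit exactly at the boundary of the oscillating prefix. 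Assuming without loss of generality that $k_x>k_{x\oplus e_i}$, this means $i\le k_x$, the prefix $x_{[k_x]}$ genuinely oscillates (hence is of type $z^{[k_x]}$ or $\overline{z}^{[k_x]}$ by \Cref{lem0015}), and flipping $x_i$ shortens the prefix to $k_{x\oplus e_i}=i-1$. Every dependence-inducing change therefore has this canonical form, and it remains to read off its direction.

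For $i>j$ the direction is exactly what \Cref{lem0016} delivers: with $k_x\ge j$ and an oscillating prefix, the value of $f^{[n]}_j$ at the shortened configuration is forced to be $0$ or $1$ according only to the parity of the broken index $i$, the parity of $n$, and the type ($z$ or $\overline{z}$) of the prefix. Moreover the value of $f^{[n]}_j$ at the unbroken configuration $x$ is itself independent of the free tail bits, since it is governed by the largest matching conjunctive/disjunctive clause, whose size is exactly $k_x$ and whose value is fixed by the same parities (the clause-domination observation preceding \Cref{table0001}). Comparing the two therefore yields one direction of change that does not depend on the tail of $x$, only on these parities, which is precisely what the $i>j$ columns of \Cref{table0001} record. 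The cases $i=j$ and $i<j$ are handled in the same spirit: for $i<j$ the shortened configuration has prefix length $i-1<j$, so by the first statement of \Cref{lem0014} its value collapses to $x_j$, and the comparison against $f^{[n]}_j(x)$ is again governed by the clause structure; the self-loop $i=j$ is computed directly from the clause over $x_j$. Both reproduce the corresponding columns of \Cref{table0001}.

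The final step is to verify that the two possible prefix types never produce opposite signs for the same pair $(i,j)$, which is where self-duality enters. Since $f^{[n]}$ is self-dual in $[n]$ by \Cref{teocidu}, \Cref{lema502} gives $(G_x(f^{[n]}),\sigma_{f^{[n]}})=(G_{\overline{x}}(f^{[n]}),\sigma_{f^{[n]}})$ for every $x$; as $\overline{x}$ carries the prefix type opposite to that of $x$, a dependence of $f^{[n]}_j$ on $x_i$ occurring at a $z$-type prefix has the same sign as the corresponding one at an $\overline{z}$-type prefix. This identifies the two prefix types, collapsing the casework, and shows that $\sigma_{f^{[n]}}(i,j)$ is genuinely single-valued across all configurations in which the edge $(i,j)$ is present. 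Consequently each $f^{[n]}_j$ is either increasing or decreasing in every component $i$, i.e.\ unate, and therefore $f^{[n]}$ is unate.

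I expect the main obstacle to be the bookkeeping that keeps the sign independent of the free tail bits across the three relative positions of $i$ and $j$, together with the self-loop case $i=j$, which is not covered by \Cref{lem0016} and must be argued from the explicit clause over $x_j$. Self-duality through \Cref{lema502} is the key lever that tames the combinatorics by identifying the $z$-type and $\overline{z}$-type prefixes; without it one would have to check sign consistency separately for each prefix type and each parity of $n$, roughly doubling the number of configurations to inspect and obscuring why the directions recorded in \Cref{table0001} are forced.
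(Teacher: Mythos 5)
Your proposal is correct and follows essentially the same route as the paper: it localizes dependence via \Cref{lem0014}, reads off the direction of change from \Cref{lem0015}, \Cref{lem0016} and \Cref{table0001} through the same case split on the relative order of $i$ and $j$, and invokes self-duality through \Cref{lema502} to identify the $z$-type and $\overline{z}$-type prefixes. The only difference is presentational: the paper argues by contradiction anchored at the reference configuration $z^{[n]}$, whereas you argue directly that all dependence-inducing configurations yield the same sign.
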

\begin{proof}
    Suppose, by contradiction, that \(f^{[n+1]}\) is not a unate Boolean network.
    Then, there exist \(i, j \in [n]\), \(x \in \{0,1\}^n \setminus \{z^{[n]}, z^{[n]} \oplus e_i\}\), such that \(x_i = z^{[n]}_i\), and without loss of generality, satisfies (\ref{eq00031}) and (\ref{eq00032}).
    \begin{align} \label{eq00031}
        f^{[n]}_j(z^{[n]}) &> f^{[n]}_j(z^{[n]} \oplus e_i), \\
        f^{[n]}_j(x) &< f^{[n]}_j(x \oplus e_i). \label{eq00032}
    \end{align}
    Evaluating at \(z^{[n]}\) is justified because, from \Cref{lema502}, the local interaction graphs with signs for \(z\) and \(\overline{z}\) coincide. Moreover, these are the only configurations with distinct images in the unate Boolean network \(h = (f^{[n-1]}, x_n)\).

    Note that the case \(k_x, k_{x \oplus e_i} < i\) cannot occur, as it would contradict (\ref{eq00032}) by \Cref{lem0014}. Assuming \(n\) is even, we proceed with a case analysis on the relationship between \(i\) and \(j\):

    $\bullet$ {\bf Case \(i > j\):} From \Cref{table0001}, it follows that \(i\) is even.
    Suppose \(k_x > i\), so \(k_{x \oplus e_i} = i-1\), an odd value.
    However, \Cref{lem0016} implies that \(f^{[n]}_j(x \oplus e_i) = 0\), contradicting our assumption.
    Conversely, if \(k_{x \oplus e_i} > i\), it follows that \(k_x = i-1\) (odd).
    Since \(x_i = z^{[n]}_i\), the only possible case is \(x_{[i-1]} = \overline{z}^{[i-1]}\), which returns 1 when evaluated in the network by \Cref{lem0016}. This contradicts (\ref{eq00032}), ruling out the case \(i > j\).

    $\bullet$ {\bf Case \(i = j\):} Since \(n\) is even, \Cref{table0001} implies that \(i\) is odd.

    Suppose \(k_x > i\) and note that \(k_{x \oplus e_i} = i-1\), an even value.
    However, since \(i = j\), \(k_{x \oplus e_i} < j\), implying that \(f^{[n]}_j(x \oplus e_i) = (x \oplus e_j)_j = \overline{x}_j\) and \(f^{[n]}_j(z^{[n]} \oplus e_i) = (z^{[n]} \oplus e_j)_j = \overline{z}^{[n]}_j\), values known to be equal because \(x_i = z^{[n]}_i\). This contradicts (\ref{eq00031}) and (\ref{eq00032}).

    Suppose \(k_{x \oplus e_i} > i\), so \(k_x = i-1\), an even value.
    Since \(i = j\), \(k_x < j\), implying \(f^{[n]}_j(x) = x_j\) and \(f^{[n]}_j(z^{[n]}) = z^{[n]}_j\).
    These values are equal because \(x_i = z^{[n]}_i\), contradicting (\ref{eq00031}) and (\ref{eq00032}).
    Therefore, the case \(i = j\) is impossible.

    $\bullet$ {\bf Case \(i < j\):} From \Cref{table0001}, \(j\) is even.
    Since \(k_{z^{[n]} \oplus e_i} = i-1 < j\), \(f^{[n]}_j(z^{[n]} \oplus e_i) = z^{[n]}_j\). Furthermore, inequality (\ref{eq00031}) implies \(z^{[n]}_j = 0\).

    Analyzing \(j\), \Cref{lem0014} precludes the possibility of \(k_x, k_{x \oplus e_i} < j\), as this would contradict (\ref{eq00032}).

    If \(k_x > j\), then \(k_{x \oplus e_i} = i-1 < j\), so \(f^{[n]}_j(x \oplus e_i) = (x \oplus e_i)_j = x_j\), and from (\ref{eq00032}), \(x_j = 1\).
    Since \(x_i = z^{[n]}_i\), it follows that \(x_{[k_x]} = z^{[n]}_{[k_x]}\), which contradicts \(x_j \neq z^{[n]}_j = 0\).

    Suppose \(k_{x \oplus e_i} > j\), then \(k_x = i-1\).
    Since \(x_i = z^{[n]}_i\), it follows that \(x_{[i-1]} = \overline{z}^{[n]}_{[i-1]}\) and \((x \oplus e_i)_{[k_{x \oplus e_i}]} = \overline{z}^{[n]}_{[k_{x \oplus e_i}]}\).
    Noting that \(i < j < k_{x \oplus e_i}\), \((x \oplus e_i)_j = x_j = \overline{z}^{[n]}_j = 1\), and since \(k_x = i-1 < j\), \(f^{[n]}_j(x) = x_j = 0\), contradicting our assumption.

    The case for \(n\) odd follows similarly to the even case. It is proven that \(f^{[n]}_j\) is unate, and therefore, the network \(f\) is unate.
\end{proof}

It is worth noting that the family described above is a non-neural network.  
To establish this, it suffices to show that one of its local activation functions is not a threshold function.  
From \cite{doi:10.1137/1.9780898718539} we know that a function is non-threshold whenever it is not assumable.

\begin{definition}
    A Boolean function $f:\{0,1\}^{n}\to\{0,1\}$ is said to be \textbf{assumable} if, for every sequence $x^{1},x^{2},\dots,x^{k}\in T(f)$ and every sequence $y^{1},y^{2},\dots,y^{k}\in F(f)$, one has
    \[
    \sum_{i=1}^{k}x^{i}\neq \sum_{i=1}^{k}y^{i}.
    \]
\end{definition}

\begin{proposition}
    The Boolean networks \(f^{[n]}\), $n\geq 4$, are non-neural networks.
\end{proposition}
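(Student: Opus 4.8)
The plan is to invoke the criterion just stated: since a non-assumable Boolean function is non-threshold, it suffices to exhibit one coordinate $j$ for which the local activation function $f^{[n]}_{j}$ fails to be assumable, i.e. to produce sequences of true points $x^{1},\dots,x^{k}\in T(f^{[n]}_{j})$ and false points $y^{1},\dots,y^{k}\in F(f^{[n]}_{j})$ with $\sum_{i}x^{i}=\sum_{i}y^{i}$. I would fix $j=1$ and exploit that, by \Cref{teocidu}, $f^{[n]}$ is self-dual, so $f^{[n]}_{1}(x)=\overline{f^{[n]}_{1}(\overline{x})}$ and the map $x\mapsto\overline{x}$ exchanges $T(f^{[n]}_{1})$ with $F(f^{[n]}_{1})$; this keeps the construction symmetric, since every true witness $x^{i}$ is mirrored into a false witness $\overline{x}^{i}$.

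To locate concrete witnesses I would read membership in $T(f^{[n]}_{1})$ and $F(f^{[n]}_{1})$ directly off the clause structure established earlier. By \Cref{lem0015} the construction is governed by the alternating configurations $z^{[k]}=t_{k}$ and $\overline{z}^{[k]}=\overline{t_{k}}$, and \Cref{lem0016} evaluates $f^{[n]}_{1}$ on the single-coordinate perturbations $t_{k}\oplus e_{i}$: for $i$ in the oscillating range the output is pinned down by the parity of $i$ together with the parity of $n$, so half of these perturbations are true and half are false in an explicitly predictable pattern. \Cref{lem0014} complements this by guaranteeing that flipping a coordinate beyond the oscillating prefix $k_{x}$ leaves $f^{[n]}_{1}$ unchanged, so the effect of each admissible flip on both the output and on the running coordinate-sum is completely controlled. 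Together these lemmas furnish an explicit supply of true and false configurations, each tagged with the way it displaces the vector sum.

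The crux — and the step I expect to be the main obstacle — is to assemble these perturbations so that the two vector sums coincide exactly. A short calculation shows that perturbations drawn only from the top-scale pair $t_{n},\overline{t_{n}}$ are useless: there every true perturbation lowers a coordinate that was $1$ while every false perturbation raises a coordinate that was $0$, so the true-sum and the false-sum can never be reconciled at a single scale; this is precisely why for small $n$, where essentially only the top-level clauses $c_{z^{[n]}},d_{\overline{z}^{[n]}}$ are active, the function remains threshold. The construction must therefore combine perturbations of oscillating configurations of at least two different prefix lengths $k$, chosen so that \Cref{lem0016} assigns them opposite dominant signs, so that the coordinates incremented by one family are decremented by the other and cancel. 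Carrying out this bookkeeping — selecting the lengths and multiplicities so that $\sum_{i}x^{i}=\sum_{i}y^{i}$ holds coordinate-by-coordinate, uniformly in the parity of $n$ — is the technical heart of the argument. Once such sequences are displayed, $f^{[n]}_{1}$ is not assumable, hence not threshold, and therefore the network $f^{[n]}$ is non-neural.
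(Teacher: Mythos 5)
There is a genuine gap: you correctly identify the criterion (exhibit non-asummability of some local function), but the proof never actually produces the required witnesses. The entire argument rests on the claim that perturbations of oscillating prefixes at two different scales can be combined so that the true-point sum and false-point sum coincide coordinate-by-coordinate, and you explicitly defer this ("the technical heart of the argument") without carrying it out. As written, nothing certifies that such a cancellation is achievable for the coordinate $j=1$ you fixed, and \Cref{lem0014}, \Cref{lem0015}, \Cref{lem0016} only tell you the \emph{values} $f^{[n]}_1(x\oplus e_i)$; they do not by themselves organize the configurations into two multisets with equal vector sums. A plan that says "select the lengths and multiplicities so that the sums agree" is precisely the statement to be proved, not a proof of it.

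The gap is also self-inflicted: $j=1$ is the worst possible choice, because $f^{[n]}_1$ carries the full nested clause structure. The paper instead looks at $j=n$, where \Cref{construccion} gives $f^{[n]}_n(x)=x_n$ for every $x\in\qn\setminus\{z^{[n]},\overline{z}^{[n]}\}$, with the value flipped only at those two exceptional configurations. A function that is the projection onto $x_n$ except at two points cannot be linearly separable, and explicit witnesses fall out immediately: for $n$ even one takes $\overline{z}^{[n]}\oplus e_1\in T(f^{[n]}_n)$ together with the false points $\overline{z}^{[n]}$ and $e_1$, using $\overline{z}^{[n]}\oplus e_1=\overline{z}^{[n]}+e_1$ (and symmetrically with $e_2$ and $z^{[n]}$ for $n$ odd). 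No multi-scale bookkeeping, and no appeal to self-duality or to \Cref{lem0016}, is needed. If you want to salvage your route, you must either switch to $j=n$ or actually display the two equal-sum families for $j=1$; until then the proposal establishes nothing beyond the (correctly quoted) reduction to asummability.
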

\begin{proof}
    We can prove that $f^{[n]}_{n}$ is a non-threshold function.  
    By the definition of the family, we know that for every $x \in \qn\setminus\{z^{[n]},\overline{z}^{[n]}\}$ one has $f^{[n]}_{n}(x)=x_{n}$.  
    We proceed by distinguishing two cases according to the parity of $n$.
    
    $\bullet$ Suppose first that $n$ is even. Recall that $\overline{z}^{[n]}\in F(f^{[n]}_{n})$ has the form
    \[
    \overline{z}^{[n]}=\sum_{i=1}^{\frac{n}{2}} e_{2i}.
    \]
    Note that $e_{1}$ and $\overline{z}^{[n]}\oplus e_{1}$ are both different from $z^{[n]}$ and $\overline{z}^{[n]}$.  
    By the definition of $f^{[n]}_{n}$ this yields $e_{1}\in F(f^{[n]}_{n})$ and $\overline{z}^{[n]}\oplus e_{1}\in T(f^{[n]}_{n})$.  
    But this implies that 
    \[
    \overline{z}^{[n]}\oplus e_{1}=\overline{z}^{[n]}+e_{1},
    \]
    which shows that $f^{[n]}_{n}$ is a non-threshold function.
    
    $\bullet$ In the case where $n$ is odd, by the construction of the family we know that $z^{[n]}\in T(f^{[n]}_{n})$ has the form
    \[
    z^{[n]}=\sum_{i=1}^{\frac{n-1}{2}} e_{2i}.
    \]
    Now, the configurations $e_{2}$ and $z^{[n]}\oplus e_{2}$ are distinct from both $z^{[n]}$ and $\overline{z}^{[n]}$, which by definition gives $e_{2}\in F(f^{[n]}_{n})$ and $z^{[n]}\oplus e_{2}\in F(f^{[n]}_{n})$.  
    However, this implies that 
    \[
    (z^{[n]}\oplus e_{2})+e_{2}=z^{[n]},
    \]
    and hence $f^{[n]}_{n}$ is a non-threshold function.
\end{proof}

On the other hand, a corollary of \Cref{{teo0014}} extending this intermediate result between the auxiliary network \(h^{[n]}\) and \(f^{[n]}\) is presented below.

\begin{corollary} \label{cor00031}
    The Boolean networks \(h^{[n]} \dot{\lor} c_{z^{[n]}}\) and \(h^{[n]} \dot{\land} d_{\overline{z}^{[n]}}\), defined in (\ref{eq:000235}) and (\ref{eq:000236}) for all $j\in [n]$, are unate.
    \begin{align}
        (h^{[n]} \dot{\lor} c_{z^{[n]}})_j(x)&=(h^{[n]}_j \lor c_{z^{[n]}})(x), \label{eq:000235} \\
        (h^{[n]} \dot{\land} d_{\overline{z}^{[n]}})_j(x)&=(h^{[n]}_j \land d_{\overline{z}^{[n]}})(x). \label{eq:000236}
    \end{align}
\end{corollary}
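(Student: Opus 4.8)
The plan is to prove unateness of the disjunctive network $h^{[n]} \dot{\lor} c_{z^{[n]}}$ and then obtain the conjunctive one for free by duality. First I would record that $h^{[n]}$ is itself unate and self-dual in $[n]$: its components $h^{[n]}_j = f^{[n-1]}_j$ (for $j<n$) are unate and self-dual by \Cref{teocidu} and \Cref{teo0014} applied at level $n-1$, while $h^{[n]}_n = x_n$ is trivially both. Writing $z := z^{[n]}$ and using the identity $d_{\overline{z}^{[n]}}(x) = \overline{c_{z^{[n]}}(\overline{x})}$ together with the self-duality of $h^{[n]}$, one checks that $(h^{[n]}_j \land d_{\overline{z}^{[n]}})(x) = \overline{(h^{[n]}_j \lor c_{z^{[n]}})(\overline{x})}$; hence $h^{[n]} \dot{\land} d_{\overline{z}^{[n]}}$ is the dual network of $h^{[n]} \dot{\lor} c_{z^{[n]}}$, and since dualization preserves the unateness (indeed the sign) of every local function, it suffices to prove that $h^{[n]} \dot{\lor} c_{z^{[n]}}$ is unate.

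The key structural observation is that $c_{z^{[n]}}$ is a conjunctive clause equal to $1$ only at $z$, and that $h^{[n]}(z) = \vec{0}$ (because $z^{[n]}_{[n-1]} = \overline{z}^{[n-1]}$ is the preimage of $\vec{0}$ under $f^{[n-1]}$, while $z_n = 0$). Consequently $g_j := h^{[n]}_j \lor c_{z^{[n]}}$ agrees with the unate function $h^{[n]}_j$ at every configuration except $z$, where its value is raised from $0$ to $1$. I would then isolate the elementary fact that raising a unate function at the single point $z$ preserves unateness unless, for some component $i$ on which $h^{[n]}_j$ depends, the sign of $h^{[n]}_j$ in $i$ \emph{disagrees} with $z_i$ (increasing while $z_i = 0$, or decreasing while $z_i = 1$) and simultaneously $h^{[n]}_j(z \oplus e_i) = 0$. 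Thus the whole statement reduces to showing that this bad combination never occurs: whenever the sign disagrees with $z_i$, in fact $h^{[n]}_j(z \oplus e_i) = 1$.

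To establish that reduction I would run the same case analysis as in the proof of \Cref{teo0014}. Since $z = t_n$ is a maximal oscillating configuration (\Cref{lem0015}), the evaluations of $f^{[n-1]}_j$ at the neighbours $z \oplus e_i$ are governed by \Cref{lem0014} and \Cref{lem0016}: for $i>j$ the value $h^{[n]}_j(z \oplus e_i)$ is read off from the alternating conjunctive/disjunctive clauses according to the parities of $i$ and $n$, whereas for $i \le j$ the flip drops $k_{z \oplus e_i}$ below $j$, so $h^{[n]}_j(z \oplus e_i)$ is determined by the single variable $x_j$. Splitting into the cases $i>j$, $i=j$, $i<j$ and the parity of $n$ (as organised in \Cref{table0001}), I would check in each branch that a disagreement of signs forces $h^{[n]}_j(z \oplus e_i)=1$, invoking \Cref{lema502} to align the behaviour read at $z$ with that read at $\overline{z}$. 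The conjunctive network then follows by the duality established in the first step.

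The main obstacle is precisely this last verification at the pair containing $z$. Unlike in \Cref{teo0014}, where it suffices to know whether an edge is present at $z$, here I must track the actual Boolean value $h^{[n]}_j(z \oplus e_i)$ in order to certify that the single upward modification is sign-compatible and does not reverse the monotonicity direction of $h^{[n]}_j$ in component $i$; carrying this bookkeeping through all the parity and position branches is the delicate part of the argument.
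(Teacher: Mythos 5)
Your proposal is correct and follows essentially the same route as the paper: the disjunctive network is handled by rerunning the case analysis of \Cref{teo0014} on the pairs through $z^{[n]}$ (the paper phrases this as the observation that the proof of \Cref{teo0014} never uses $f^{[n+1]}(\overline{z})=\vec{0}$), and the conjunctive network is then obtained by the $z \leftrightarrow \overline{z}$ symmetry, which the paper routes through \Cref{lema502} and you make explicit as dualization of $h^{[n]} \dot{\lor} c_{z^{[n]}}$. Your single-point-modification reduction is a cleaner packaging of the same verification, not a different argument.
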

\begin{proof}
    To prove that \(h^{[n]} \dot{\lor} c_{z^{[n]}}\) is unate, we proceed analogously to the proof of \Cref{teo0014}. This is because the fact that \(f^{[n+1]}(\overline{z}) = \vec{0}\) is not used.

    For the network \(h^{[n]} \dot{\land} d_{\overline{z}^{[n]}}\), from \Cref{lema502}, the local interaction graphs with signs satisfy:
    \[
    (G_{\overline{z}}(h^{[n]} \dot{\land} d_{\overline{z}^{[n]}}), \sigma_{g_1}) = (G_{\overline{z}}(f^{[n+1]}), \sigma_{g_2}) = (G_z(f^{[n+1]}), \sigma_{g_3}) = (G_z(h^{[n]} \dot{\lor} c_{z^{[n]}}), \sigma_{g_4}).
    \]
    Hence, \(h^{[n]} \dot{\land} d_{\overline{z}^{[n]}}\) is a unate network.
\end{proof}

\subsection{General case of Hamiltonian Boolean networks}

Unate BNs can model Hamiltonian behaviors because, from \Cref{cor00031}, we can manipulate the local activation functions to establish any Hamiltonian dynamics of maximum and intermediate height.

\begin{theorem} \label{teo:000100}
    Every Hamiltonian digraph \(G_\Gamma \in \mathcal{G}(n)\) has an associated unate Boolean network with dynamics isomorphic to \(G_\Gamma\).
\end{theorem}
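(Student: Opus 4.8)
The plan is to reduce the statement to constructing, for each possible attractor length, a single explicit unate network. First I would record a structural normalization: if \(G_\Gamma \in \mathcal{G}(n)\) is Hamiltonian with attractor of length \(k\), then because every vertex has out-degree one and a single trajectory covers all \(2^n\) vertices, the transient part cannot branch; hence \(G_\Gamma\) is, up to isomorphism, a directed path on \(2^n - k\) vertices whose terminal arc enters a directed cycle of length \(k\) (with \(k=1\) the maximum-height/fixed-point case, \(2 \le k \le 2^n-1\) the intermediate-height cases, and \(k=2^n\) the Hamiltonian-cycle case, matching \Cref{defi0010}). Since this normal form is unique for each \(k\), it suffices to exhibit, for every \(k \in \{1,\dots,2^n\}\), one unate Boolean network whose dynamics is a tail attached to a cycle of length \(k\).

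For \(k = 2^n\) there is nothing to do: \(f^{[n]}\) is a Hamiltonian cycle by \Cref{teocidu} and unate by \Cref{teo0014}. For the remaining lengths the engine is \Cref{cor00031}. Starting from the auxiliary network \(h^{[n]} = (f^{[n-1]}, x_n)\), whose dynamics are two disjoint cycles of length \(2^{n-1}\) (one in each slice \(x_n = 0\) and \(x_n = 1\)), the unate network \(h^{[n]} \dot{\lor} c_{z^{[n]}}\) redirects the preimage of \(\vec{0}\) so that the \(x_n=0\) cycle opens into a tail feeding the \(x_n=1\) cycle, giving an intermediate-height network of attractor length \(2^{n-1}\); symmetrically \(h^{[n]} \dot{\land} d_{\overline{z}^{[n]}}\) works, and applying both clauses recovers \(f^{[n]}\). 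Thus \Cref{cor00031} already yields the lengths \(k \in \{2^{n-1}, 2^n\}\) directly, and the remaining lengths are obtained by recursion.

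The recursion I would set up proves, by induction on \(n\), that every length \(k \in \{1,\dots,2^n\}\) is realizable by a unate tail-plus-cycle network, with base cases \(f^{[1]} = (\overline{x}_1)\) for \(k=2\) and the constant network \(f \equiv \vec{1}\) for \(k=1\). At level \(n\) I would place an inductively obtained unate tail-plus-cycle network on the \(x_n=0\) slice in place of \(f^{[n-1]}\) inside the auxiliary construction of \Cref{construccion}, keeping \(f^{[n-1]}\) (or another inductive block) on the \(x_n=1\) slice, and then splice the two slices into a single tail-plus-cycle by the same preimage-swapping mechanism. Choosing the inner attractor length together with whether one splices with a single clause or with both lets the resulting attractor length range over all of \(\{1,\dots,2^n-1\}\): roughly, a single splice of two copies of an inner length-\(k\) block realizes the short-cycle regime \(k \le 2^{n-1}\), while a double splice of a length-\(r\) block against a full length-\(2^{n-1}\) cycle realizes the long-cycle regime \(k = 2^{n-1}+r\).

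The main obstacle is the unateness bookkeeping at each splice, together with verifying that the length count is exact and exhaustive. Unateness would follow from a generalization of \Cref{cor00031}: I would show that OR-ing with \(c_z\) and AND-ing with \(d_{\overline{z}}\) at the distinguished configurations still preserves unateness when the inner block is a tail-plus-cycle rather than the Hamiltonian cycle \(f^{[n-1]}\), invoking \Cref{lema502} to transport the sign of each affected local interaction from the configuration being modified to its complement \(\overline{x}\). The delicate point is that, once \(f^{[n-1]}\) is replaced, the configurations one must redirect are no longer literally \(z^{[n]}, \overline{z}^{[n]}\) but their analogues in the modified block, so the self-dual symmetry exploited for \(f^{[n]}\) must be re-established for each inductive block; carrying a slightly strengthened induction hypothesis that records this symmetry (and the locations of the preimages of \(\vec{0}\) and \(\vec{1}\)) is what makes the splice arguments go through.
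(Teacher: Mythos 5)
Your reduction to tail-plus-cycle normal forms indexed by the attractor length $k$ is the same first step the paper takes implicitly, and your handling of $k=2^n$ via \Cref{teocidu} and \Cref{teo0014} matches the paper. But the core of your plan --- an induction on $n$ that splices inductively obtained unate tail-plus-cycle blocks into the two slices $x_{n}=0$ and $x_{n}=1$ --- has a genuine gap exactly where you flag ``the delicate point.'' \Cref{teo0014}, \Cref{cor00031} and \Cref{lema502} are proved for the specific networks $f^{[n]}$ and $h^{[n]}=(f^{[n-1]},x_n)$; their proofs lean heavily on the self-duality of $f^{[n-1]}$, on the oscillating structure $z^{[k]}=t_k$ of \Cref{lem0015}, and on the clause bookkeeping of \Cref{lem0014}--\ref{lem0016}. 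Once you replace the inner block by an arbitrary (non-self-dual, non-bijective) tail-plus-cycle network, none of these lemmas applies, and ``carrying a slightly strengthened induction hypothesis'' is not a proof: you would have to redo the entire unateness analysis of \Cref{teo0014} for each modified block, and it is not at all clear this can be done. As written, your argument only actually establishes the lengths $k\in\{2^{n-1},2^n\}$.

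The missing idea that makes the recursion unnecessary is that one can redirect the single arc leaving $z^{[n]}$ in the Hamiltonian cycle $\Gamma(f^{[n]})$ to an \emph{arbitrary} configuration $u$, componentwise, without leaving the unate class. Since $f^{[n]}_j(z^{[n]})=1$ for every $j$ (forced by the clause $c_{z^{[n]}}$), and since $h^{[n]}_j\,\dot{\land}\,d_{\overline{z}^{[n]}}$ agrees with $f^{[n]}_j$ everywhere except at $z^{[n]}$, where it takes the value $0$, the paper defines $g^{[n]}_j$ to be $f^{[n]}_j$ when $u_j=1$ and $h^{[n]}_j\,\dot{\land}\,d_{\overline{z}^{[n]}}$ when $u_j=0$. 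Each choice is globally one of the two unate functions already certified by \Cref{teo0014} and \Cref{cor00031}, so no new unateness argument is needed, and $g^{[n]}(z^{[n]})=u$. Letting $u$ range over the cycle realizes every attractor length $k\in\{1,\dots,2^n-1\}$ in one stroke. You should replace your inductive splicing by this observation; your slice-splicing picture is the right intuition for \emph{why} $h^{[n]}\,\dot{\lor}\,c_{z^{[n]}}$ and $h^{[n]}\,\dot{\land}\,d_{\overline{z}^{[n]}}$ exist, but the flexibility you need comes from choosing $u$, not from modifying the inner block.
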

\begin{proof}
    Let \(n \in \mathbb{N}\) and \(G_\Gamma\) be a Hamiltonian digraph with \(2^n\) vertices.
    The case where \(G_\Gamma\) is a Hamiltonian cycle follows from \Cref{teo0014}.
    If \(G_\Gamma\) is not a Hamiltonian cycle, we can define \(g^{[n]} \in \mathcal{F}(G_\Gamma)\) from \(f^{[n]}\).
    To do so, it suffices to change the arc \((z^{[n]}, \vec{1}) \in A(\Gamma(f^{[n]}))\) to the arc \((z^{[n]}, u)\), \(u \neq \vec{1}\), describing a Hamiltonian dynamic, but not of the type of a Hamiltonian cycle.

    Since by definition of \(f^{[n]}\) we can define \(g^{[n]} \in \mathcal{F}(G_\Gamma)\) as described in (\ref{eq:00701}).

    \begin{equation} \label{eq:00701}
        g^{[n]}_j(x) =
        \begin{cases}
            (h^{[n]}_j \dot{\land} d_{\overline{z}^{[n]}})(x) & \text{if } x = z^{[n]} \text{ and } u_j = 0, \\
            f^{[n]}_j(x) & \text{otherwise}.
        \end{cases}
    \end{equation}

    According to \Cref{teo0014} and \Cref{cor00031}, the local activation functions \(g^{[n]}_j\) are unate.
    If \(x \in \{0,1\}^n \setminus \{z^{[n]}\}\), it follows that \(g^{[n]}(x) = f^{[n]}(x)\), forming a Hamiltonian Boolean network.
    By definition, \(g^{[n]}(z^{[n]}) = u\), proving that \(g^{[n]}\) is a unate and Hamiltonian Boolean network of maximum height when \(u = z^{[n]}\), or of intermediate height otherwise.
\end{proof}

The network \(h^{[n]}\) served as an auxiliary tool for constructing Hamiltonian cycle dynamics.
However, this construction can be exploited further to extend the implications of \Cref{cor00031}.

Transitioning from a Hamiltonian cycle to another Hamiltonian dynamic requires changing only one arc.
However, we demonstrate that this can be done for both the image of \(z^{[n]}\) and \(\overline{z}^{[n]}\).

\begin{definition} \label{defi0019}
    A directed graph \(G_\Gamma \in \mathcal{G}(n)\) is called 2-Hamiltonian if all arcs of the digraph can be covered by two trajectories of length \(2^{n-1}\).
\end{definition}

\begin{figure}[htb!]
    \centering
    \begin{tikzpicture}[scale=1,every node/.style={circle, draw=black!60, fill=white, text=black, minimum size=6pt,inner sep=4pt,outer sep=4pt}]
    \node[draw=white, fill=white, text=black](n) at (-110pt,35pt) {$G_{\Gamma}:$};
    \node (1) at (-80pt,20pt) {$1$};
    \node (2) at (-35pt,-25pt) {$2$};
    \node (3) at (-15pt,30pt) {$3$};
    \node (4) at (25pt,-40pt) {$4$};
    \node (5) at (45pt,30pt) {$5$};
    \node (6) at (80pt,-25pt) {$6$};
    \node (7) at (110pt,20pt) {$7$};
    \node (8) at (140pt,-25pt) {$8$};
    \path[ultra thick, -to]
    (1) edge[bend right=10] (2) 
    (2) edge[bend right=10] (4) 
    (3.-160.00) edge[controls=+(-190.00:30pt)and +(-100.00:30pt),black] (3.-120.00)
    (4) edge[bend right=10] (6) 
    (6) edge[bend right=10] (5)
    (5) edge[bend right=10] (3)
    (8) edge[bend right=10] (7)
    (7) edge[bend right=10] (5);
    \end{tikzpicture}
    \caption{Example of a 2-Hamiltonian digraph.}
    \label{fig:figu11}
\end{figure}

2-Hamiltonian digraphs illustrate the ability to modify two images of the auxiliary network \(h^{[n]}\) while maintaining the property of being a unate Boolean network.
Examples of 2-Hamiltonian digraphs include Hamiltonian digraphs, \(\Gamma(h^{[n]})\), or the one described in \Cref{fig:figu11}, among others. For this last example, the arcs can be covered by two trajectories \(P:1,2,4,6,5\) and \(Q:8,7,5,3,3\), both of equal length, demonstrating its 2-Hamiltonian property.

Note that 2-Hamiltonian digraphs do not necessarily induce properties in the connectivity of the interaction graph.
A clear example of a disconnected interaction graph is \(G(h^{[n]})\).

\begin{corollary}
    Any 2-Hamiltonian digraph \(G_\Gamma \in \mathcal{G}(n)\) has a unate Boolean network with dynamics isomorphic to \(G_\Gamma\).
\end{corollary}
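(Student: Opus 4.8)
The plan is to realize $G_{\Gamma}$ by starting from the auxiliary network $h^{[n]}$ and redirecting only the out-arcs of the two distinguished configurations $z:=z^{[n]}$ and $\overline{z}$ (the case $n=1$ being immediate). Recall from \Cref{construccion} that $\Gamma(h^{[n]})$ splits into two disjoint cycles of length $2^{n-1}$, namely the fibers $x_{n}=0$ and $x_{n}=1$; that $h^{[n]}$ and $f^{[n]}$ agree on $\qn\setminus\{z,\overline{z}\}$; and that $h^{[n]}(z)=\vec{0}$, $h^{[n]}(\overline{z})=\vec{1}$, whereas $f^{[n]}(z)=\vec{1}$, $f^{[n]}(\overline{z})=\vec{0}$. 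Deleting the out-arcs of $z$ and $\overline{z}$ leaves two vertex-disjoint simple directed paths $P_{0}$ (ending at $z$) and $P_{1}$ (ending at $\overline{z}$), each on $2^{n-1}$ vertices, whose vertex sets partition $\qn$. I will build a network $g$ with $g=h^{[n]}$ off $\{z,\overline{z}\}$ and prescribed images $u:=g(z)$ and $w:=g(\overline{z})$, so that $\Gamma(g)$ re-attaches the loose ends of $P_{0}$ and $P_{1}$ to $u$ and $w$.

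First I would align $G_{\Gamma}$ with this backbone. By \Cref{defi0019} its arcs are covered by two arc-disjoint trajectories $T_{1},T_{2}$ of length $2^{n-1}$; since each lives in an out-degree-one digraph, its $2^{n-1}$ arc-sources are pairwise distinct, and the two source sets partition $\qn$. Hence each $T_{i}$ is a simple directed path on $2^{n-1}$ vertices closed by a single arc to an endpoint $e_{i}\in\qn$. Mapping $T_{1},T_{2}$ order-preservingly onto $P_{0},P_{1}$ produces a bijection $\phi\colon\qn\to\qn$ carrying every internal arc correctly; it then suffices to make the two closing arcs agree, which I arrange by setting $u:=\phi(e_{1})$ and $w:=\phi(e_{2})$. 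With these choices $\phi$ witnesses $G_{\Gamma}\cong\Gamma(g)$, once $g$ is shown to realize $z\mapsto u$, $\overline{z}\mapsto w$ while remaining unate.

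The heart of the proof is unateness, which I would establish componentwise from the pair $(u_{j},w_{j})$, using $h^{[n]}_{j}(z)=0$ and $h^{[n]}_{j}(\overline{z})=1$. If $(u_{j},w_{j})=(0,1)$, no change is needed and I set $g_{j}=h^{[n]}_{j}$, unate as it equals $f^{[n-1]}_{j}$ or $x_{n}$. If $(u_{j},w_{j})=(1,1)$ I set $g_{j}=(h^{[n]}\dot{\lor} c_{z^{[n]}})_{j}$, and if $(u_{j},w_{j})=(0,0)$ I set $g_{j}=(h^{[n]}\dot{\land} d_{\overline{z}^{[n]}})_{j}$; both are unate by \Cref{cor00031}, and a direct check shows they force the desired value at $z$, preserve it at $\overline{z}$, and leave $h^{[n]}_{j}$ unchanged elsewhere. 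The remaining pattern $(u_{j},w_{j})=(1,0)$ is the only one unreachable through a single clause $c_{z^{[n]}}$ or $d_{\overline{z}^{[n]}}$; here I observe that $f^{[n]}_{j}$ itself satisfies $f^{[n]}_{j}(z)=1$, $f^{[n]}_{j}(\overline{z})=0$, and agrees with $h^{[n]}_{j}$ off $\{z,\overline{z}\}$, so I take $g_{j}=f^{[n]}_{j}$, unate by \Cref{teo0014}. Each $g_{j}$ being one of these four unate functions, the network $g$ is unate.

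I expect the main obstacle to be precisely this case analysis: recognizing that the doubly flipped pattern $(1,0)$, which lies beyond the single-clause modifications of \Cref{cor00031}, is supplied exactly by the Hamiltonian-cycle network $f^{[n]}$ of \Cref{teo0014}, so that all four component patterns are covered without ever combining both clauses on one coordinate. A secondary point needing care is the structural claim that every $2$-Hamiltonian digraph splits into two source-paths partitioning $\qn$; this relies on the fact that a trajectory in an out-degree-one digraph cannot revisit a vertex except possibly at its final step, which forces its arc-sources to be distinct.
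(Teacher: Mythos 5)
Your proof is correct and follows essentially the same route as the paper: it realizes $G_\Gamma$ by redirecting the out-arcs of $z^{[n]}$ and $\overline{z}^{[n]}$ in the backbone $h^{[n]}/f^{[n]}$ and certifies unateness via \Cref{teo0014} and \Cref{cor00031}. Your componentwise four-case resolution $(u_j,w_j)\in\{0,1\}^2 \mapsto \{h^{[n]}_j,\ h^{[n]}_j\lor c_{z^{[n]}},\ h^{[n]}_j\land d_{\overline{z}^{[n]}},\ f^{[n]}_j\}$ is exactly what the paper's piecewise definition (\ref{eq:00700}) amounts to once unpacked, and your explicit alignment of the two covering trajectories with the two fibers of $\Gamma(h^{[n]})$ just makes precise a step the paper leaves implicit.
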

\begin{proof}
    Let \(n \in \mathbb{N}\), and suppose \(G_\Gamma\) is 2-Hamiltonian, distinct from a Hamiltonian cycle, with \(2^n\) vertices.
    Since \(f^{[n]}\) is a Hamiltonian cycle, it is also 2-Hamiltonian, with trajectories \(P\) and \(Q\) defined in (\ref{eq:00710}) and (\ref{eq:00711}), respectively.
    \begin{align} \label{eq:00710}
        P &= \vec{1}, f^{[n]}(\vec{1}), (f^{[n]})^2(\vec{1}), \ldots, (f^{[n]})^{2^{n-1}-2}(\vec{1}), \overline{z}^{[n]}, \\
        Q &= \vec{0}, f^{[n]}(\vec{0}), (f^{[n]})^2(\vec{0}), \ldots, (f^{[n]})^{2^{n-1}-2}(\vec{0}), z^{[n]}. \label{eq:00711}
    \end{align}
    Since \(G_\Gamma\) is 2-Hamiltonian, it can be covered using \(P\) and \(Q\).
    Let \(u, v \in \{0,1\}^n\) be the images of \(\overline{z}^{[n]}\) and \(z^{[n]}\) in the coverage of \(G_\Gamma\), and define \(g^{[n]} \in \mathcal{F}(G_\Gamma)\) as the network describing this coverage.
    Based on \(f^{[n]}\), the arcs \((z^{[n]}, \vec{1})\) and \((\overline{z}^{[n]}, \vec{0}) \in A(\Gamma(f^{[n]}))\) are replaced with \((z^{[n]}, u)\) and \((\overline{z}^{[n]}, v)\).
    \(g^{[n]}\) is described as shown in (\ref{eq:00700}).
    
    \begin{equation} \label{eq:00700}
        g^{[n]}_j(x) =
        \begin{cases}
            (h^{[n]}_j \dot{\land} d_{\overline{z}^{[n]}})(x) & \text{if } x = z^{[n]} \text{ and } u_j = 0, \\
            (h^{[n]}_j \dot{\lor} c_{z^{[n]}})(x) & \text{if } x = \overline{z}^{[n]} \text{ and } v_j = 1, \\
            f^{[n]}_j(x) & \text{otherwise}.
        \end{cases}
    \end{equation}
    
    From \Cref{teo0014} and \Cref{cor00031}, \(g^{[n]}_j\) are unate local activation functions.
    For \(x \in \{0,1\}^n \setminus \{z^{[n]}, \overline{z}^{[n]}\}\), \(g^{[n]}(x) = f^{[n]}(x)\).
    Additionally, \(g^{[n]}(z^{[n]}) = u\), \(g^{[n]}(\overline{z}^{[n]}) = v\), and it follows that \(g^{[n]} \in \mathcal{F}(G_\Gamma)\).
\end{proof}

\section{Conclusions} \label{chapter51}

In this work, Hamiltonian dynamics were addressed with the aim of contributing to the understanding of extreme dynamic behaviors, which achieve maximum possible values in parameters of interest such as height, the length of the limit cycle, and the minimum number of Garden of Eden states, among others.

The relationship between the digraph \(G_{\Gamma}\) of Hamiltonian dynamics and the associated interaction graph was demonstrated. In particular, the existence of networks that cannot be modeled using interaction graphs \(G(f)\) with bounded in-degree was proven, requiring specific connectivity conditions to reproduce these dynamics (see \Cref{tab:resumen}).

\begin{table}[htb!]
    \centering
    \begin{tabular}{|c|c|c|c|}
        \hline
        \makecell{\textbf{Type of} \\ \textbf{dynamics}} 
        & \makecell{\textbf{Variable with} \\ \textbf{total dependency}} 
        & \makecell{\textbf{Type of} \\ \textbf{connectivity}} 
        & \makecell{\textbf{Existence of} \\ \textbf{unate} \\ \textbf{network}} 
        \\ \hline
        \makecell{Hamiltonian of \\ maximum height} 
        & Yes 
        & \makecell{Strongly \\ connected} 
        & Yes
        \\ \hline
        \makecell{Hamiltonian \\ intermediate with\\ even period}
        & Yes
        & \makecell{Unilaterally \\ connected}
        & Yes
        \\ \hline
        \makecell{Hamiltonian \\ intermediate with\\ odd period}
        & Yes
        & \makecell{Strongly \\ connected} 
        & Yes
        \\ \hline
        \makecell{Hamiltonian \\ cycle} 
        & Not necessarily
        & \makecell{Unilaterally \\ connected}
        & Yes
        \\ \hline
        \makecell{Quasi- \\ Hamiltonian}
        & \makecell{Not necessarily} 
        & \makecell{Strongly \\ connected} 
        & Unknown
        \\ \hline
        2-Hamiltonian 
        & Not necessarily
        & No restrictions
        & Yes
        \\ \hline
    \end{tabular}
    \caption{Summary of properties present in the dynamics under study.}
    \label{tab:resumen}
\end{table}

Additionally, the inherent limitations of certain families of BNs for modeling Hamiltonian dynamics were analyzed. As a primary contribution, a family of unate networks \(f^{[n]}\) with Hamiltonian dynamics was presented, including cases of maximum height, intermediate height, quasi-Hamiltonian, Hamiltonian cycles, and their generalization to 2-Hamiltonian dynamics. The network \(f^{[n]}\) is notable for being self-dual, suggesting that self-duality in \([n]\) may be a necessary condition for any Hamiltonian cycle network to be unate.

Furthermore, the network \(f^{[n]}\) allows corroboration of the capacity of unate networks to model dynamics with an attractor of arbitrary length without requiring these networks to be bijective. This result broadens the understanding of unate networks and their applications in modeling dynamic systems.

Finally, although the results presented are limited to networks defined over a binary alphabet, the techniques and constructions developed in this work could be generalizable to networks with alphabets of size \(q \geq 2\). This aspect opens the door to new lines of research exploring the extension of these properties to complex systems.

\section*{Acknowledgements}

The authors would like to thank Florian Bridoux and Adrien Richard for their valuable comments and suggestions, which helped to improve this manuscript. This work was partially supported by ANID-Chile through the Center for Mathematical Modeling (CMM), BASAL project FB210005, and by the ANID-Subdirección de Capital Humano, Magíster Nacional 2023, 22231646.

\bibliographystyle{abbrv}
\bibliography{main}
\end{document}